\newtheoremstyle{localthm}
	{5pt} 
	{5pt} 
	{\sl} 
	{} 
	{\bf} 
	{{\rm.}} 
	{.7em} 
	{} 
\theoremstyle{localthm}
\newtheorem{Theorem}{Theorem}
\newtheorem{Lemma}[Theorem]{Lemma}
\newtheorem{Corollary}[Theorem]{Corollary}
\newtheorem{Proposition}[Theorem]{Proposition}
\newtheoremstyle{localrem}
	{5pt} 
	{5pt} 
	{\rm} 
	{} 
	{\bf} 
	{{\rm.}} 
	{.7em} 
	{} 
\theoremstyle{localrem}
\def\bs{\boldsymbol}
\def\vb{\bs{v}}
\def\wb{\bs{w}}
\def\x{\bs{x}}
\def\y{\bs{y}}
\def\z{\bs{z}}
\def\V{\bs{V}}
\def\X{\bs{X}}
\def\Y{\bs{Y}}
\def\bsf{\bs{f}}
\def\bF{\bs{F}}
\def\bmu{\bs{\mu}}
\def\bTheta{\bs{\Theta}}
\def\bSigma{\bs{\Sigma}}
\def\bDelta{\bs{\Delta}}
\def\Vi{\V_{\!\!i}}
\def\Xi{\X_{\!i}}
\def\Yi{\Y_{\!\!i}}
\def\Yzero{\Y_{\!\!\bs{0}}}
\def\Yz{\Y_{\!\!\z}}
\def\Cov{\mathrm{Cov}}
\def\Var{\mathrm{Var}}
\def\vMF{\mathrm{vMF}}
\def\Bh{\mathrm{Bh}}
\def\Ex{\operatorname{\mathbb{E}}}
\def\Pr{\operatorname{\mathbb{P}}}
\def\trace{\operatorname{\mathrm{trace}}}
\def\argmin{\mathop{\rm arg\,min}}
\def\R{\mathbb{R}}
\def\S{\mathbb{S}}
\def\d{\mathrm{d}}
\def\FF{\mathcal{F}}
\def\LL{\mathcal{L}}
\def\XX{\mathcal{X}}
\def\hat{\widehat}
\begin{document}

\addtolength{\baselineskip}{+.15\baselineskip}

\title{Nonparametric Smoothing of Directional and Axial Data}

\author{Lutz D{\"u}mbgen$^1$ and Caroline Haslebacher$^{1,2}$\\[1ex]
\it{$^1$University of Bern}\\
\it{$^2$Southwest Research Institute, Boulder, Colorado, USA}}

\date{July 2025}

\maketitle

\paragraph{Abstract.}
We discuss generalized linear models for directional data where the conditional distribution of the response is a von Mises--Fisher distribution in arbitrary dimension or a Bingham distribution on the unit circle. To do this properly, we parametrize von Mises--Fisher distributions by Euclidean parameters and investigate computational aspects of this parametrization. Then we modify this approach for local polynomial regression as a means of nonparametric smoothing of distributional data. The methods are illustrated with simulated data and a data set from planetary sciences involving covariate vectors on a sphere with axial response.

\paragraph{AMS subject classification:}
62J12, 62H11, 62G05

\paragraph{Key words:}
Bingham distributions, exponential family, Galileo space mission, generalized linear model, local polynomial modelling, stereographic projection, von Mises--Fisher distributions.

\section{Introduction}
\label{sec:Introduction}

The starting point for the present paper are data sets from Planetary Sciences. The Galileo Mission yielded raw imaging data from the moons Ganymede and Europa of the planet Jupiter. Both moons have a global subsurface ocean with an ice shell on top. From pictures of linear surface features on the icy surface, the physicists extracted observation pairs $(\Xi,\Vi)$, $1 \le i \le n$, consisting of a center position $\Xi$ on the surface at which a feature, presumably a crack in the ice layer, could be identified, and the direction $\Vi$ of that crack. Identifying the surface of the moon with the unit sphere $\S^2$ of $\R^3$, we are talking about points $\Xi \in \S^2$ and $\Vi \in \S^2 \cap \Xi^\perp$, where $\Vi$ represents the axis $\R\Vi$. Figures~\ref{fig:Europa_raw} and \ref{fig:Europa_processed} illustrate this process for the moon Europa. Figure~\ref{fig:Europa_raw} shows a superposition of a basemap with a particular region highlighted. In addition one sees some fine structures on the icy surface which become visible when zooming in. By means of tailored classic and deep learning algorithms \citep{Haslebacher2024a,HaslebacherPSJ2025} one can extract data pairs $(\Xi,\Vi)$ as described before, see Figure~\ref{fig:Europa_processed}. The whole data set contains $5623$ observation pairs in $19$ different regions, but we only show $2184$ of these pairs. The region in the foreground (black dots, green lines) corresponds to the region with red boundary in Figure~\ref{fig:Europa_raw}. It contains $2656$ observations $(\Xi,\Vi)$, and we show a subset of $200$ pairs.

The task is to smooth these noisy data and to determine for various points $\x_o \in \S^2$ whether there is a preferred axis direction of ice cracks close to $\x_o$. The existence and strength of preferred directions allows conclusions about physical properties of the moon, e.g.\ the link of tidal forces to ice cracks \citep{Rhoden2013}. Translated into a mathematical task, we want to determine for such a point $\x_o$ a Bingham distribution \citep{Bingham_1974} on the unit circle $\S^2 \cap \x_o^\perp$ of the tangent space $\x_o^\perp$ of the unit sphere at $\x_o$.

\begin{figure}
\centering
\includegraphics[width=0.9\textwidth]{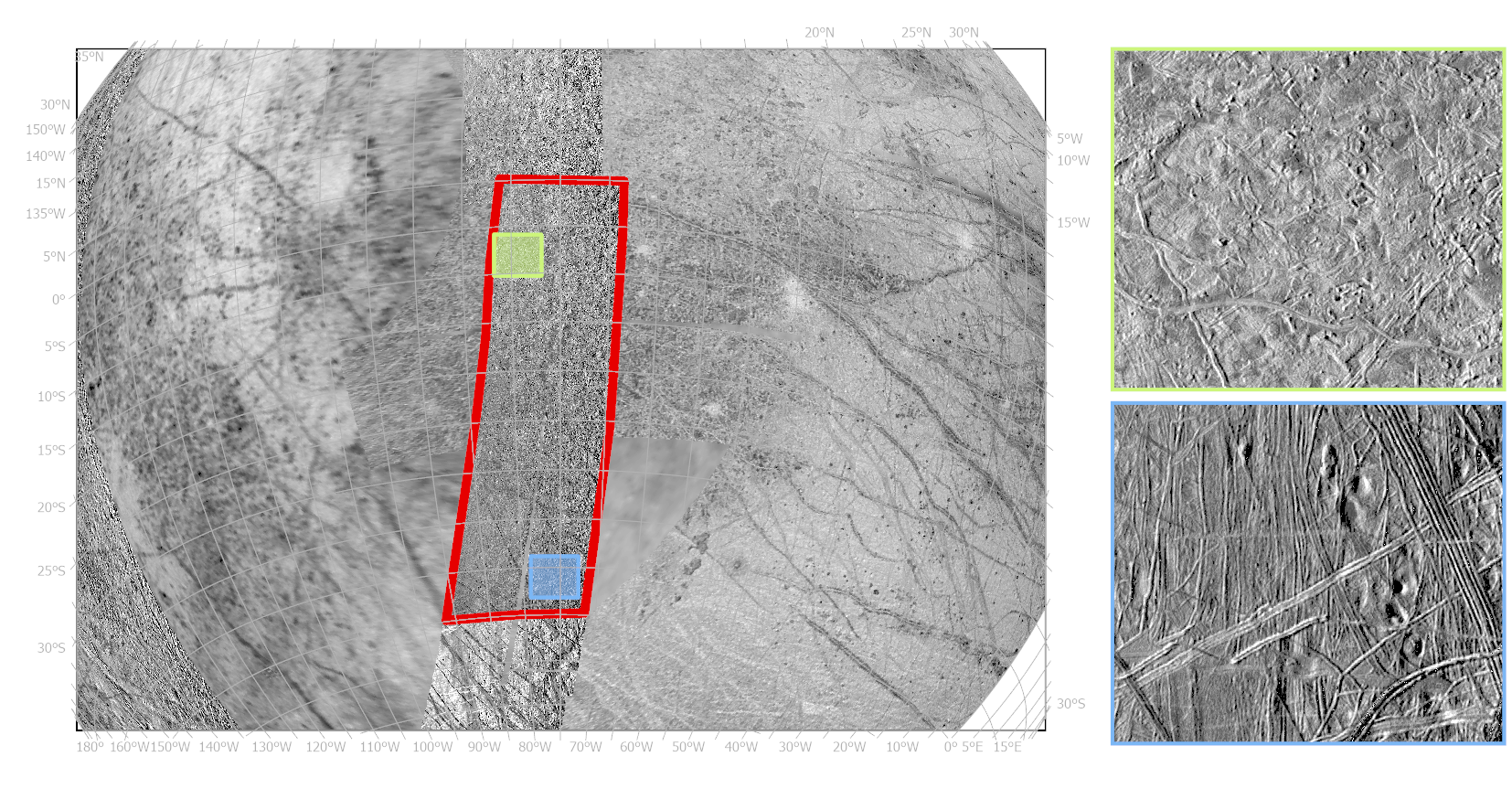}
\caption{An image of Europa's surface (left panel) and some fine structures.}
\label{fig:Europa_raw}
\end{figure}

\begin{figure}
(a) \hfill (b) \hfill \strut\\
\includegraphics[width=0.49\textwidth]{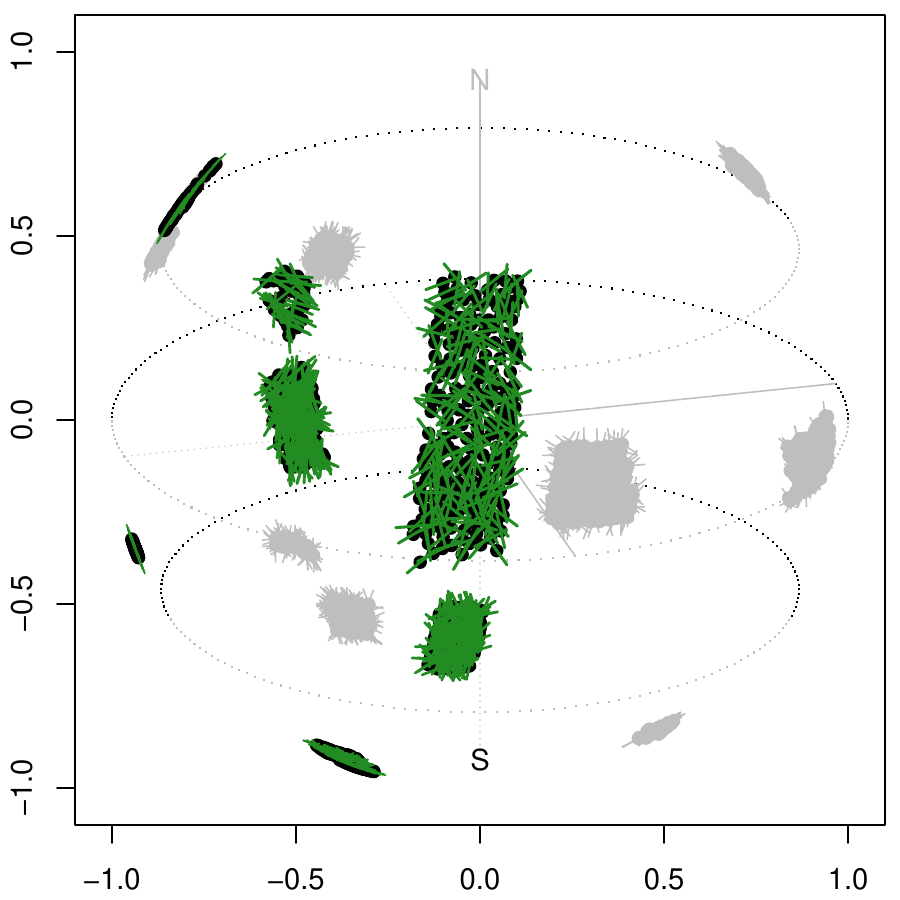}
\hfill
\includegraphics[width=0.49\textwidth]{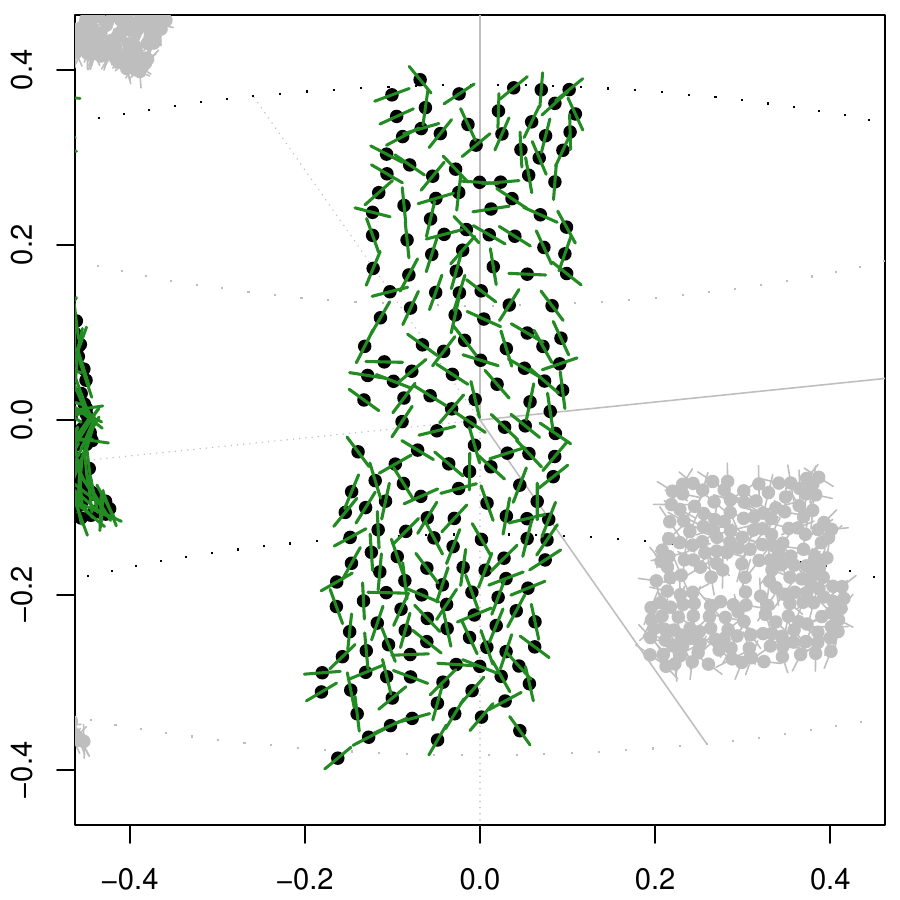}

\caption{(a) Processed image data $(\Xi,\Vi)$ for Europa's surface. The points $\Xi$ are shown as black dots, the axes $\Vi$ are indicated by green line segments connecting $\Xi \pm 0.05\cdot\Vi$. Observations on the backside are shown in gray. (b) The main cluster corresponds to the region highlighted in red in Figure~\ref{fig:Europa_raw}.}
\label{fig:Europa_processed}
\end{figure}

When thinking about smaller subregions of the unit sphere $\S^2$, one could approximate this by a subset of $\R^2$, and the task looks \textsl{similar} to the task of smoothing directional data, given by pairs $(\Xi,\Yi) \in \R^2 \times \S^1$, $1 \le i \le n$, where the conditional distribution of $\Yi$, given $\Xi$, is a von Mises--Fisher distribution with parameters depending smoothly on $\Xi$. More generally, one could think about observations $(\Xi,\Yi)$ with covariates $\Xi$ in an arbitrary covariate space $\XX$ and directional responses $\Yi \in \S^{d-1}$ for some $d \ge 2$.

Two general references about directional data, including regression methods, are the monograph of \cite{Mardia_Jupp_2000} and the review paper of \cite{Pewsey_Garcia-Portugues_2021}. These references describe and cite numerous approaches to regression with directional response. Some of these approaches involve transformations of the directional response into points in some Euclidean space, but the choice of this transformation is somewhat ad hoc. Let us restrict our attention to models in which $\Yi \in \S^{d-1}$ has a von Mises--Fisher distribution conditional on $\Xi$. Many authors characterize a von Mises--Fisher distribution via a direction parameter $\bs{\beta} \in \S^{d-1}$ and a concentration parameter $\kappa \ge 0$ which is often viewed as a nuisance parameter, whereas $\bs{\beta}$ is the parameter of interest. We propose to replace such a parameter pair $(\bs{\beta},\kappa)$ with the vector $\bs{z} = \kappa \bs{\beta} \in \R^d$. Since $\kappa = \|\bs{z}\|$ and $\bs{\beta} = \kappa^{-1} \bs{z}$, this description is equivalent, and dealing with parameters in a Euclidean space is more convenient, particularly in generalized linear models.

The remainder of this paper is structured as follows. Section~\ref{sec:Directional.axial} starts with a short recap of von Mises--Fisher (vMF) distributions and Bingham (Bh) distributions on the unit sphere $\S^{d-1}$ of $\R^d$. Then we describe a very useful connection between these two families for dimension $d = 2$. Section~\ref{sec:Regression} describes regression methods for data with directional response vectors. As a parametric approach, we propose generalized linear models (GLMs). For a thorough introduction and overview of this topic we refer to \cite{McCullagh_Nelder_1989}. Thereafter we describe a modification of this parametric approach to local parametric modelling in the spirit of \cite{Fan_etal_1998}. The latter method is illustrated with simulated data. In Section~\ref{sec:Europa} we explain how to modify the nonparametric methods of Section~\ref{sec:Regression} for smoothing axial data on a sphere, illustrating this method with the specific data mentioned in the beginning. A key tool are stereographic projections of the covariate vectors $\Xi \in \S^2$ onto $\R^2$ and suitable transformations of the accompanying axes $\Vi \in \S^2 \cap \Xi^\perp$ into points on the unit circle $\S^1$.

An appendix contains additional information and proofs. In particular, Appendix~\ref{app:Computational.aspects.vMF} provides details about the numerical computation of the mean vector and covariance matrix of a von Mises--Fisher distribution. This involves exact formulae as well as approximations for parameter vectors with large norm and is potentially of independent interest.

\section{Directional and axial Gaussian distributions}
\label{sec:Directional.axial}

For a given dimension $d \ge 2$, we consider the space $\R^d$ equipped with standard Euclidean norm $\|\cdot\|$. A directional distribution describes a random unit vector $\Y \in \S^{d-1}$, where $\S^{d-1} = \{\y \in \R^d : \|\y\| = 1\}$. An axial distribution describes a random unit vector $\V \in \S^{d-1}$ with symmetric distribution, where $\V$ represents the random line $\R\V$. In what follows, let $M$ denote the uniform distribution on $\S^{d-1}$.

\subsection{Von Mises--Fisher distributions}

The von Mises--Fisher distribution with parameter $\z \in \R^d$, denoted by $\vMF(\z)$, is given by the density
\[
	f_{\z}(\y) \ := \ e_{}^{\z^\top\y - \gamma(\z)}
\]
with respect to $M$, where $\gamma : \R^d \to \R$ is the cumulant-generating function of $M$,
\[
	\gamma(\z) \ := \ \log \int e_{}^{\z^\top\y} \, M(\d\y) .
\]
Some authors rather talk about the vMF distribution with concentration parameter $\|\z\|$ and center $\|\z\|^{-1} \z$, see \cite{Mardia_Jupp_2000}. But in view of the regression methods introduced later, we stick to the parametrization in terms of vectors in $\R^d$.

The family $(\vMF(\z))_{\z \in \R^d}$ is a natural exponential family \citep{Barndorff-Nielsen_2014}, and it is well-known that for a random vector $\Y \sim \vMF(\z)$, its mean vector and covariance matrix are given by the gradient and Hessian matrix of $\gamma$,
\begin{align*}
	\bmu(\z) := \Ex(\Y) \
	&= \ \nabla \gamma(\z) , \\
	\bSigma(\z) := \Cov(\Y) \
	&= \ D^2 \gamma(\z) .
\end{align*}
Other well-known facts are that the mapping $\bmu : \R^d \to \{\x \in \R^d : \|\x\| < 1\}$ is a diffeomorphism. Precisely,
\begin{equation}
\label{eq:mu.vMF}
	\bmu(t \vb) \ = \ \tilde{\gamma}_d'(t) \vb
	\quad\text{for} \ \vb \in \S^{d-1}, t \in \R ,
\end{equation}
where $\tilde{\gamma}_d : \R \to \R$ is an even, strictly convex and analytic function such that $\tilde{\gamma}_d(0) = 0$, and $\tilde{\gamma}_d' : \R \to (-1,1)$ is bijective, see Appendix~\ref{app:Computational.aspects.vMF}. By means of a Lagrange argument one can show that the density $f_{\z}$ maximizes differential Shannon entropy
\[
	- \int f(\y) \log f(\y) \, M(\d\y)
\]
among all probability densities $f$ with respect to $M$ satisfying
\[
	\int f(\y) \y \, M(\d\y) \ = \ \bmu(\z) .
\]
For that reason, some authors refer to the vMF distributions as Gaussian directional distributions, alluding to the fact that any Gaussian density maximizes entropy among all densities with the same first and second moments.

\subsection{Bingham distributions}

A random axis $\R\V$ in $\R^d$, given by a random unit vector $\V \in \S^{d-1}$ with symmetric distribution, can be identified with the symmetric matrix $\V\V^\top \in \R^{d\times d}$ describing the orthogonal projection of $\R^d$ onto $\R\V$. Maximizing differential Shannon entropy over all densities $f$ with respect to $M$ such that $\int f(\vb) \vb\vb^\top \, M(\d\vb)$ is a given symmetric matrix with trace $1$ leads to the following type of distribution: For a symmetric matrix $\bs{W} \in \R^{d\times d}$ with trace $0$, the Bingham distribution $\Bh(\bs{W})$ has density
\[
	f_{\bs{W}}^{\rm B}(\vb) \
	= \ e_{}^{\langle \vb\vb^\top, \bs{W}\rangle - \gamma^{\rm B}(\bs{W}} \
	= \ e_{}^{\vb^\top\bs{W}\vb - \gamma^{\rm B}(\bs{W})} ,
\]
where
\[
	\gamma^{\rm B}(\bs{W}) \ := \ \log \int e_{}^{\vb^\top\bs{W}\vb} \, M(\d\vb) .
\]
\cite{Bingham_1974} introduced these distributions for dimension $d = 3$, but the extension to arbitrary dimensions $d \ge 2$ is straightforward. Again, since the density of $\Bh(\bs{W})$ maximizes differential Shannon entropy among all probability densities $f$ such that $\int f(\vb) \vb\vb^\top \, M(\d\vb)$ is a given symmetric matrix with trace $1$, some authors refer to the Bh distributions as Gaussian axial distributions.

\subsection{The special case of $d = 2$}

In the special case of the unit circle $\S^1$, there is a useful connection between vMF and Bh distributions. Starting from an axis $\R\vb$ with $\vb \in \S^1$, we rescale and center the corresponding projection matrix $\vb\vb^\top$ similarly as \cite{Arnold_Jupp_2013} and write
\[
	2 \vb \vb^\top - \bs{I}_2 \
	= \ \begin{bmatrix} 2v_1^2 - 1 & 2 v_1^{} v_2^{} \\ 2 v_1^{} v_2^{} & 2 v_2^2 - 1 \end{bmatrix} \
	= \ \begin{bmatrix} v_1^2 - v_2^2 & 2 v_1^{} v_2^{} \\ 2 v_1^{} v_2^{} & v_2^2 - v_1^2 \end{bmatrix} \
	= \ \begin{bmatrix} y_1(\vb) & y_2(\vb) \\ y_2(\vb) & - y_1(\vb) \end{bmatrix} ,
\]
where
\begin{equation}
\label{eq:def.y(v)}
	\y(\vb) \ := \ [v_1^1 - v_2^2, 2 v_1^{} v_2^{}]^\top \ \in \ \mathbb{S}^1 .
\end{equation}
If we identify any vector in $\R^2$ with a complex number, then $\y(\vb) = \vb^2$. In particular, this shows immediately that $\|\y(\vb)\| = 1$. Hence there is a one-to-one correspondence between axes in $\R^2$ and the unit circle $\mathbb{S}^1$, and specifying the expectation $\int \vb\vb^\top \, Q(\d\vb)$ for a distribution $Q$ on $\S^1$ is equivalent to specifying the expectation $\int \y(\vb) \, Q(\d\vb)$. Note further that any symmetric matrix $\bs{W} \in \R^{2\times 2}$ with trace $0$ may be written as
\begin{equation}
\label{eq:W.and.z}
	\bs{W} \ = \ \begin{bmatrix} z_1 & z_2 \\ z_2 & -z_1 \end{bmatrix}
\end{equation}
for some $\z \in \R^d$, and elementary algebra reveals that
\[
	\vb^\top \bs{W} \vb \ = \ \z^\top \y(\vb) .
\]
Furthermore, if $\V \sim M$, then $\y(\V) \sim M$ too, and these findings imply the following very useful fact.

\begin{Proposition}
\label{prop:Bh.vMF}
With $\bs{W}$ and $\z$ as in \eqref{eq:W.and.z}, a random vector $\V$ follows $\Bh(\bs{W})$ if and only if $\y(\V)$ has distribution $\vMF(\z)$.
\end{Proposition}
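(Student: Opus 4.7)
The plan is to verify the equivalence directly from the densities, using the two identities already established just above the proposition: the algebraic identity $\vb^\top \bs{W}\vb = \z^\top \y(\vb)$ and the distributional identity that $\y(\V) \sim M$ whenever $\V \sim M$.

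First I would compute the normalizing constants. Since $\V \sim M$ implies $\y(\V) \sim M$, the change-of-variables identity
\[
	\int e_{}^{\vb^\top \bs{W}\vb} \, M(\d\vb) \
	= \ \int e_{}^{\z^\top \y(\vb)} \, M(\d\vb) \
	= \ \int e_{}^{\z^\top \y} \, M(\d\y)
\]
holds, which shows that $\gamma^{\rm B}(\bs{W}) = \gamma(\z)$. This is the key bookkeeping step and removes any ambiguity about normalization.

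Next I would verify the proposition by pushing forward the density. For any bounded measurable test function $g : \S^1 \to \R$, if $\V \sim \Bh(\bs{W})$ then
\[
	\Ex g(\y(\V)) \
	= \ \int g(\y(\vb)) \, e_{}^{\vb^\top \bs{W}\vb - \gamma^{\rm B}(\bs{W})} \, M(\d\vb) \
	= \ \int g(\y(\vb)) \, e_{}^{\z^\top \y(\vb) - \gamma(\z)} \, M(\d\vb) ,
\]
and applying the pushforward $\y(\V) \sim M$ once more gives
\[
	\Ex g(\y(\V)) \ = \ \int g(\y) \, e_{}^{\z^\top \y - \gamma(\z)} \, M(\d\y) \ = \ \Ex_{\vMF(\z)} g ,
\]
so $\y(\V) \sim \vMF(\z)$. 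For the converse, observe that the map $\vb \mapsto \y(\vb) = \vb^2$ (in complex notation) is exactly $2$-to-$1$ with fibre $\{\pm\vb\}$, and the Bingham density $f^{\rm B}_{\bs{W}}$ is invariant under $\vb \mapsto -\vb$; hence a symmetric distribution on $\S^1$ is determined by the law of $\y(\V)$, so the implication reverses.

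There is no real obstacle here: both directions reduce to the two preparatory identities and the symmetry of the Bingham density. The only point that requires a word of care is the converse direction, where one must invoke the symmetry $\V \overset{d}{=} -\V$ (built into the definition of an axial distribution) to conclude that the $\vMF(\z)$-law of $\y(\V)$ pins down the Bingham law of $\V$ uniquely.
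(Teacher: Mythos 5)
Your proof is correct and follows the same route the paper intends: the paper states the two preparatory identities ($\vb^\top\bs{W}\vb = \z^\top\y(\vb)$ and $\y(\V)\sim M$ for $\V\sim M$) and leaves the deduction implicit, which is exactly the push-forward computation you carry out. Your explicit treatment of the converse---noting that the $2$-to-$1$ map $\vb\mapsto\vb^2$ together with the symmetry of an axial law makes the correspondence injective---is a correct and worthwhile elaboration of a point the paper glosses over.
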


To visualize a Bingham distribution $\Bh(\bs{W})$, let us resort to random angles. One may write
\[
	\bs{W} \ = \ \kappa
	\begin{bmatrix} \cos(2\beta) & \sin(2\beta) \\ \sin(2\beta) & -\cos(2\beta) \end{bmatrix}
\]
for some $\kappa \ge 0$ and some angle $\beta \in [0,\pi)$. If $\vb = [\cos(\theta), \sin(\theta)]^\top$ for some angle $\theta \in [0,2\pi)$, then $\y(\vb) = [\cos(2\theta), \sin(2\theta)]^\top$ and
\[
	\vb^\top \bs{W} \vb = \y(\vb)^\top \z \ = \ \kappa \cos(2 (\theta - \beta)) .
\]
Hence, $\V \sim \Bh(\bs{W})$ means that $\V = [\cos(\tilde{V}), \sin(\tilde{V})]^\top$ with a random angle $\tilde{V} \in [0,2\pi)$ having density
\begin{equation}
\label{eq:Bingham.angular}
	\tilde{f}_{\wb}(\theta) \ := \ e_{}^{\kappa \cos(2(\theta - \beta)) - \tilde{\gamma}_2(\kappa)}
\end{equation}
with respect to the uniform distribution on $[0,2\pi)$, where
\[
	\tilde{\gamma}_2(\kappa) \
	:= \ \log \Bigl( \frac{1}{2\pi}
		\int_0^{2\pi} e_{}^{\kappa\cos(2 (\theta - \beta))} \, \d\theta \Bigr) \
	= \ \log  \Bigl( \frac{1}{\pi}
		\int_0^{\pi} e_{}^{\kappa\cos(t)} \, \d t \Bigr) .
\]
This shows that in case of $\kappa > 0$, the preferred axis direction is $\pm [\cos(\beta), \sin(\beta)]^\top$, and $\kappa$ measures the strength of this preference. Thus we reparametrize the Bingham distribution $\Bh(\bs{W})$ with the parameter vector
\begin{equation}
\label{eq:def.Bh.w}
	\wb \ := \ \kappa [\cos(\beta), \sin(\beta)]^\top
\end{equation}
and write $\Bh(\wb)$ instead of $\Bh(\bs{W})$.

Figure~\ref{fig:Bingham} depicts the distribution $\Bh(\wb)$ for two different vectors $\wb = \kappa [\cos(\beta),\sin(\beta)]^\top$. One possible representation is a graph of the angular density $\tilde{f}$ in \eqref{eq:Bingham.angular}. Another possibility is to draw an ``axial histogram'', that is, the region surrounded by the curve $\theta \mapsto \tilde{f}(\theta)^{1/2} [\cos(\theta), \sin(\theta)]^\top$, leading to a symmetric but non-circular ``pie'' (unless $\kappa = 0$). The area of any ``slice'' of this pie with boundary angles $0 \le \theta_1 < \theta_2 \le 2\pi$ equals $2\pi$ times the probability that $\tilde{V} \in [\theta_1,\theta_2)$.

\begin{figure}
\includegraphics[width=0.49\textwidth]{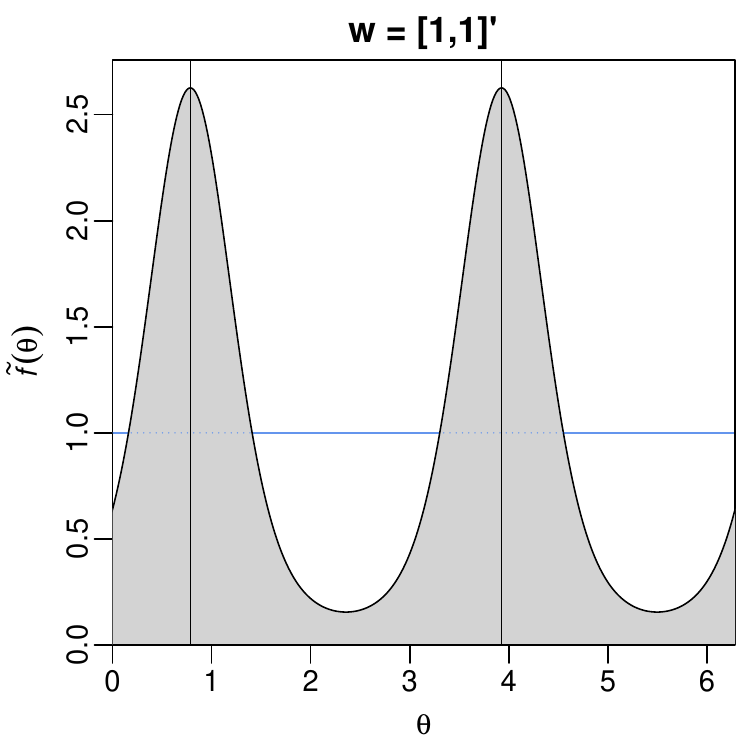}
\hfill
\includegraphics[width=0.49\textwidth]{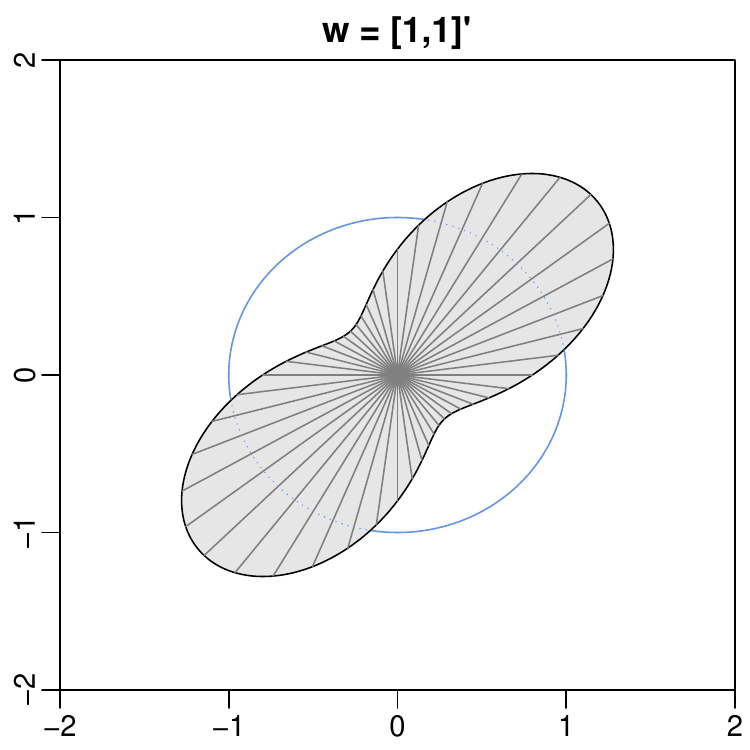}

\includegraphics[width=0.49\textwidth]{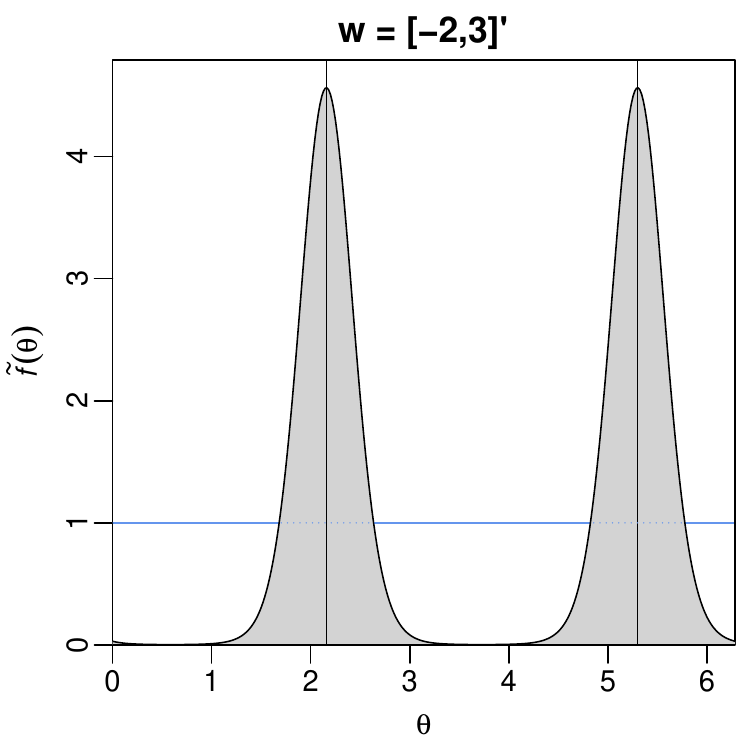}
\hfill
\includegraphics[width=0.49\textwidth]{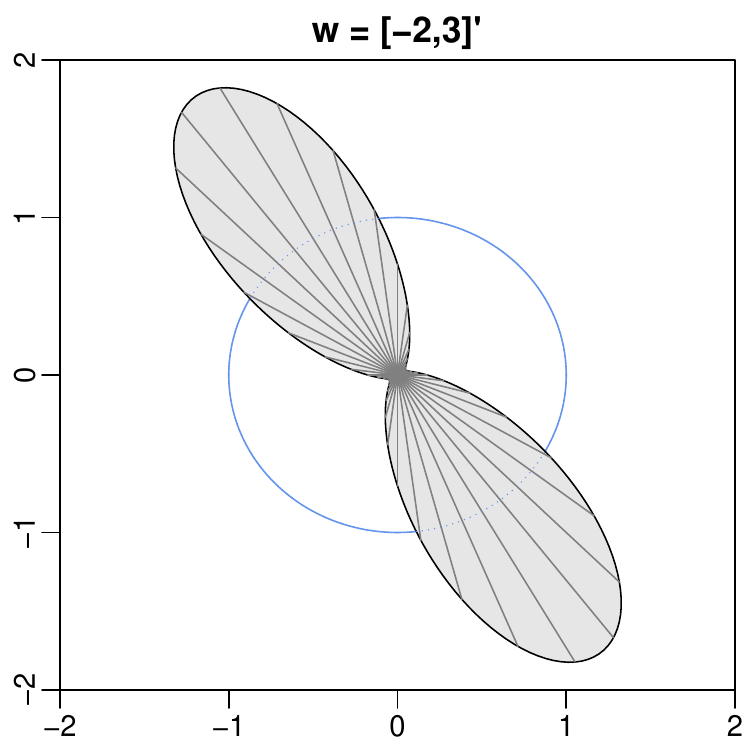}

\caption{Bingham distributions $\Bh(\wb)$ for $\wb = [1,1]^\top$ (top row, $\kappa \approx 1.414$, $\beta \approx 0.785$) and $\wb = [-2,3]^\top$ (bottom row, $\kappa \approx 3.606$, $\beta \approx 2.159$). The left column shows the angular densities $\tilde{f}$, the right column the ``axial histograms''.}
\label{fig:Bingham}
\end{figure}

The next proposition provides some additional formulae which will be used later and follow from Proposition~\ref{prop:Bh.vMF}, basic properties of von Mises--Fisher distributions and elementary calculations.

\begin{Proposition}
\label{prop:Moments.Bh.w}
Let $\V \sim \Bh(\wb)$ with $\wb$ as in \eqref{eq:def.Bh.w}. Then,
\[
	\Ex(\V\V^\top) \
	= \ 2^{-1} (1 - \tilde{\gamma}_2'(\kappa)) \bs{I}_2^{}
		+ \tilde{\gamma}_2'(\kappa) \bs{u} \bs{u}_{}^\top
\]
with $\bs{u} := [\cos(\beta),\sin(\beta)]^\top$, the preferred axis direction of $\Bh(\wb)$. Moreover, with $\|\cdot\|_F$ denoting Frobenius norm,
\[
	\Ex \bigl( \| \V\V^\top - \Ex(\V\V^\top) \|_F^2 \bigr) \
	\ = \ \frac{1 - \tilde{\gamma}_2'(\kappa)^2}{2} .
\]
\end{Proposition}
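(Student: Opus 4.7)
The plan is to reduce everything to the already-established fact that $\y(\V)\sim\vMF(\z)$ with $\z = \kappa[\cos 2\beta,\sin 2\beta]^\top$ (this is the parametrization implicit in the displayed identity $\bs{W} = \kappa\bigl[\begin{smallmatrix}\cos 2\beta & \sin 2\beta \\ \sin 2\beta & -\cos 2\beta\end{smallmatrix}\bigr]$ together with \eqref{eq:W.and.z}) and then move moments of $\V\V^\top$ through the linear bijection between symmetric trace-$0$ $2\times 2$ matrices and $\R^2$ introduced before Proposition~\ref{prop:Bh.vMF}.

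First I would rewrite the projection matrix as
\[
    \V\V^\top \ = \ \tfrac{1}{2}\bs{I}_2
        + \tfrac{1}{2}\begin{bmatrix} y_1(\V) & y_2(\V) \\ y_2(\V) & -y_1(\V) \end{bmatrix},
\]
which is just a rearrangement of the displayed identity defining $\y(\vb)$. Taking expectations and applying \eqref{eq:mu.vMF} together with Proposition~\ref{prop:Bh.vMF} gives $\Ex\y(\V) = \tilde\gamma_2'(\kappa)[\cos 2\beta,\sin 2\beta]^\top$, and substituting back yields a closed form for $\Ex(\V\V^\top)$ in terms of $\cos 2\beta$ and $\sin 2\beta$. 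To match the stated expression, I would apply the standard double-angle identities so that the matrix $\bigl[\begin{smallmatrix}\cos 2\beta & \sin 2\beta \\ \sin 2\beta & -\cos 2\beta\end{smallmatrix}\bigr]$ becomes $2\bs{u}\bs{u}^\top - \bs{I}_2$; collecting the $\bs{I}_2$ terms then produces the claimed convex-combination form $\tfrac{1}{2}(1-\tilde\gamma_2'(\kappa))\bs{I}_2 + \tilde\gamma_2'(\kappa)\bs{u}\bs{u}^\top$.

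For the second identity I would exploit the same rewriting: subtracting the expectation,
\[
    \V\V^\top - \Ex(\V\V^\top) \
    = \ \tfrac{1}{2}\begin{bmatrix} y_1(\V) - \Ex y_1(\V) & y_2(\V) - \Ex y_2(\V) \\ y_2(\V) - \Ex y_2(\V) & -(y_1(\V) - \Ex y_1(\V)) \end{bmatrix} .
\]
The Frobenius norm squared of this matrix equals $\tfrac{1}{2}\|\y(\V) - \Ex\y(\V)\|^2$, because each diagonal entry contributes its square once and each off-diagonal entry contributes twice. Hence $\Ex\|\V\V^\top - \Ex(\V\V^\top)\|_F^2 = \tfrac{1}{2}\Ex\|\y(\V) - \Ex\y(\V)\|^2$, and since $\|\y(\V)\| = 1$ almost surely and $\|\Ex\y(\V)\| = \tilde\gamma_2'(\kappa)$, the right-hand side collapses to $\tfrac{1}{2}(1 - \tilde\gamma_2'(\kappa)^2)$.

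No step here is a real obstacle; the only mild care needed is the accurate bookkeeping between the parameter $\kappa$ of $\Bh(\wb)$ and the norm of the vMF parameter vector $\z$ (which is also $\kappa$, not $2\kappa$, since the map $\wb \mapsto \z$ defined by $\z = \kappa[\cos 2\beta,\sin 2\beta]^\top$ is an isometry), and the double-angle manipulation that turns the $\cos 2\beta$/$\sin 2\beta$ form into the $\bs{u}\bs{u}^\top$ form.
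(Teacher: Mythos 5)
Your proof is correct and follows exactly the route the paper indicates (the paper gives no written proof, stating only that the result ``follows from Proposition~\ref{prop:Bh.vMF}, basic properties of von Mises--Fisher distributions and elementary calculations''): you use the identity $2\V\V^\top - \bs{I}_2 = \bigl[\begin{smallmatrix} y_1(\V) & y_2(\V)\\ y_2(\V) & -y_1(\V)\end{smallmatrix}\bigr]$, the correspondence $\y(\V)\sim\vMF(\z)$ with $\|\z\|=\kappa$, the mean formula \eqref{eq:mu.vMF}, and the double-angle identity $2\bs{u}\bs{u}^\top-\bs{I}_2$. The bookkeeping of $\kappa$ and the factor-of-two accounting in the Frobenius norm are both handled correctly, so nothing is missing.
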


In view of this proposition, in the context of regression with Bingham-distributed response vectors, we visualize $\Bh(\wb)$ simply by the vector pair $\pm \tilde{\gamma}_2'(\kappa) \bs{u}$.

\section{Regression methods for directional data}
\label{sec:Regression}

Consider observation pairs $(\Xi,\Yi)$, $1 \le i \le n$, consisting of a covariate (vector) $\Xi$ in an arbitrary set $\XX$ and a response vector $\Yi \in \S^{d-1}$ such that
\[
	\LL(\Yi \,|\, \X_1, \ldots, \X_n) \ = \ \vMF(\bsf^*(\Xi))
\]
for some unknown regression function $\bsf^* : \XX \to \R^d$. The task is to estimate $\bsf^*$ from the given data.

\subsection{Generalized linear models (GLMs)}

A possible parametric approach is to assume that the unknown regression function $\bsf^*$ belongs to a given finite-dimensional space $\FF$ of functions $\bsf : \XX \to \R^d$. Conditioning on the covariates $\Xi$, $1 \le i \le n$, and assuming that the responses $\Yi$ are conditionally independent, the resulting negative log-likelihood function $\ell = \ell(\cdot \,|\, \mathrm{data}) : \FF \to \R$ is given by
\[
	\ell(\bsf) \ = \ \sum_{i=1}^n \bigl( \gamma(\bsf(\Xi)) - \Yi^\top \bsf(\Xi) \bigr) . 
\]
In our specific settings, any $\bsf \in \FF$ can be written as $\bsf = (f_k)_{k=1}^d$ with functions $f_k$ in a given finite-dimensional space $\FF^o$ of functions $f^o : \XX \to \R$. For this particular case, Appendix~\ref{app:Derivatives.weighted.NLL} provides technical details about the first and second derivatives of the function $\ell(\cdot)$ after a suitable parametrization of $\FF_o$. These formulae enable us to minimize $\ell(\bsf)$ over all $\bsf \in \FF$ via a Newton-Raphson procedure.

\subsection{Smoothing via local GLMs}

A possible extension of parametric GLMs are nonparametric analyses, where the unknown regression function $\bsf^*$ is only assumed to be ``smooth'' and estimated via local parametric models in the spirit of \cite{Fan_etal_1998}. The value of $\bsf^*$ at a particular point $\x^o \in \XX$ is estimated by $\hat{\bsf}(\x^o) = \hat{\bsf}_{\!\x_o}(\x^o)$, where
\[
	\hat{\bsf}_{\!\x_o}(\cdot) \ = \ \argmin_{\bsf \in \FF} \,
		\sum_{i=1}^n w_{\x_o}(\Xi) \bigl( \gamma(\bsf(\Xi)) - \Yi^\top \bsf(\Xi) \bigr)
\]
with a certain weight function $w_{\x_o}(\cdot) : \XX \to [0,\infty)$. In principle, one could even let the model $\FF$ depend on $\x_o$, but for our specific settings this is not done.

Specifically let $\XX = \R^q$. For a given number $N \in (0,n)$, one could look for the $N$-nearest neighbors of $\x_o$ among the observed vectors $\Xi$ and then fit a local polynomial model to the corresponding pairs $(\Xi,\Yi)$. That is, one could reorder the observations $(\Xi,\Yi)$ such that the distance $\|\Xi - \x_o\|$ is non-decreasing in $i$, and then set $w_{\x_o}(\x) = 1_{[\|\x - \x_o\| \le R_N]}$ with $R_N := \|\bs{X}_{\!N} - \x_o\|$. We propose a smooth version of this nearest-neighbor approach and set
\begin{equation}
\label{eq:local.weights}
	w_{\x_o}(\x) \ = \ \exp( - S_{\x^o,N} \|\x - \x^o\|^2) ,
\end{equation}
where $S_{\x^o,N} > 0$ is chosen such that $\sum_{i=1}^n \exp \bigl( - S_{\x^o,N} \|\Xi - \x^o\|^2) = N$.

If $\FF$ consists of all constant functions with values in $\R^d$, then the estimation task is easily solved by
\[
	\hat{\bsf}(\x^o) \ = \ \bmu_{}^{-1}(\bar{\Y}(\x_o))
\]
with the local mean
\[
	\bar{\Y}(\x_o) \ := \ \sum_{i=1}^n w_{\x_o}(\X_i) \Yi \Big/ \sum_{i=1}^n w_{\x_o}(\Xi)
\]
and the inverse $\bmu_{}^{-1} : \{\z \in \R^d : \|\z\| < 1\} \to \R^d$ of the mean function $\bmu(\cdot)$ for vMF distributions.

If $\FF$ consists of all linear functions $\bsf : \R^q \to \R^d$, then a suitable local basis of $\FF^o$ is given by the $r = q+1$ functions $f^o_0(\x) := 1$ and
\[
	f^o_j(\x) \ := \ x_j^{} - x_{o,j}^{}, \quad 1 \le j \le q ,
\]
with $\x_o = (x_{o,j})_{j=1}^q$.

If $\FF$ consists of all quadratic functions $\bsf : \R^q \to \R^d$, then the former $q+1$ local basis functions are complemented with the $q(q+1)/2$ functions
\[
	f^o_{jk}(\x) \ := \ (x_j^{} - x_{o,j})(x_k^{} - x_{o,k}) , \quad 1 \le j \le k \le q ,
\]
so $\FF^o$ has dimension $r = (q+1)(q+2)/2$.

\subsection{Numerical example and simulation study}

To illustrate these methods, we simulated $n = 4000$ independent observations $(\Xi,\Yi)$ $1 \le i \le n$, such that $\Xi$ is uniformly distributed on $[-1,1]^2$ while conditional on $\Xi$, $\Yi$ follows the vMF distribution with parameter $\bsf^*(\Xi)$, where
\[
	\bsf^*(\x) \ = \ \exp( - 2 \|\x\|^2) \begin{bmatrix} 1 \\ 3x_1 \end{bmatrix} .
\]
Figure~\ref{fig:GLM_data} depicts the regression function $\bmu(\bsf^*)$ as a vector field (left panel) and shows the raw data $(\Xi,\Yi)$ (right panel). For the true function, at each point $\x_o$ on the regular grid $\XX_o = \{k/10 : -10 \le k \le 10\}^2$ of $21^2 = 441$ points (gray bullets), a line segment connecting $\x_o$ with $\x_o + 0.18\cdot\bmu(\bsf^*(\x_o))$ is attached (black line). Similarly, at each location $\Xi$, a line segment connecting $\Xi$ and $\Xi + 0.05\cdot\Yi$ is attached.

\begin{figure}
\includegraphics[width=0.49\textwidth]{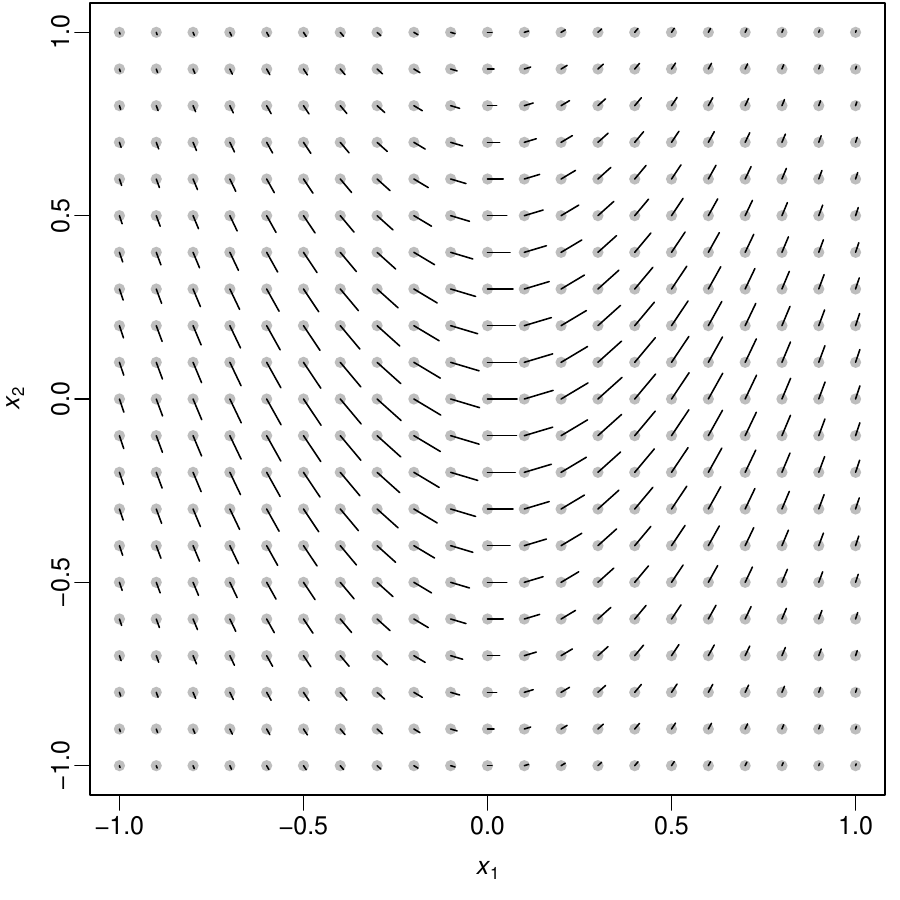}
\hfill
\includegraphics[width=0.49\textwidth]{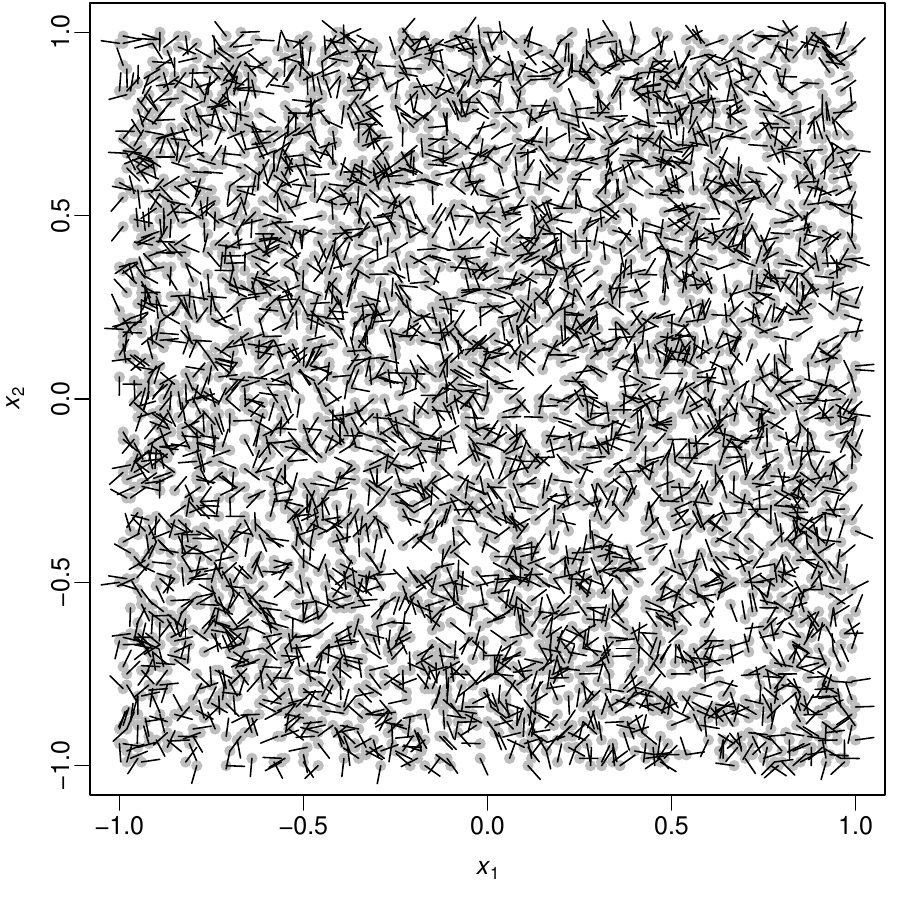}
\caption{True function $\bmu(\bsf^*)$ (left panel) and simulated raw data $(\Xi,\Yi)$ (right panel).}
\label{fig:GLM_data}
\end{figure}

Figure~\ref{fig:GLM_fhat} depicts estimated regression functions $\bmu(\hat{\bsf})$ for $N = 400$ together with the true regression function $\bmu(\bsf^*)$. Precisely, one sees the estimators based on local constant and local quadratic modelling. Looking at these plots carefully, one can see that the first estimator is more biased than the second one close to the boundary of $[-1,1]^2$. Moreover, the first estimator seems to underestimate the norm of $\bmu(\bsf^*)$ in the central region, whereas the second one is rather accurate there.

\begin{figure}
\includegraphics[width=0.49\textwidth]{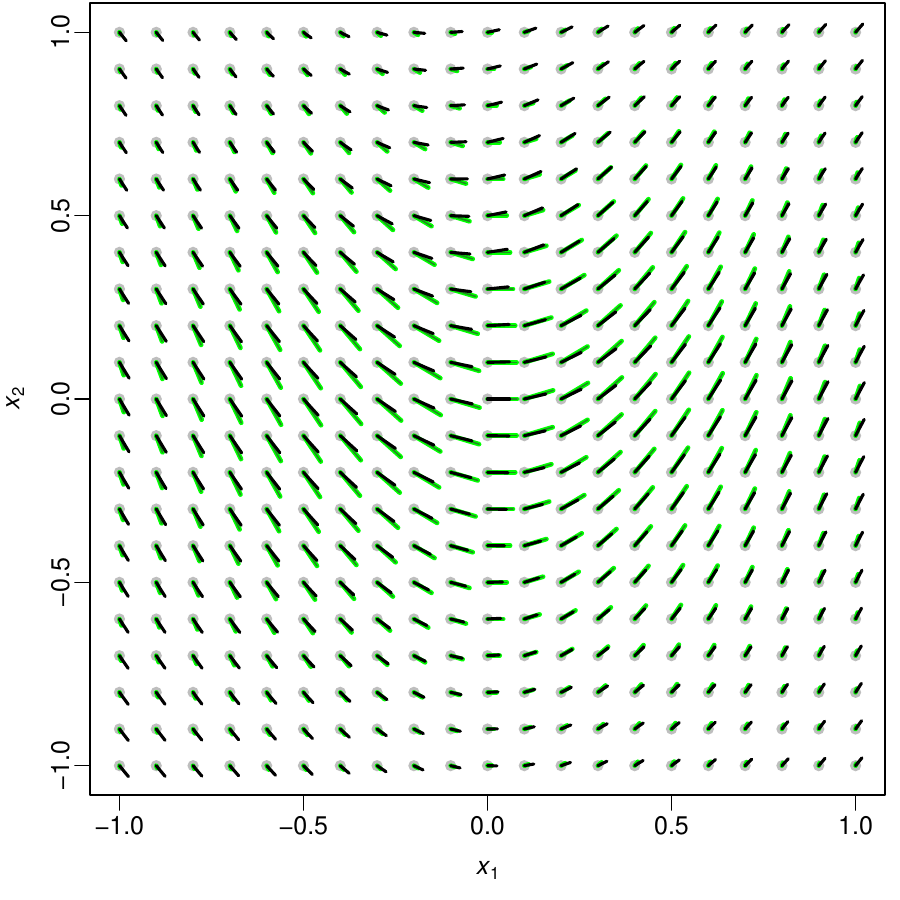}
\hfill
\includegraphics[width=0.49\textwidth]{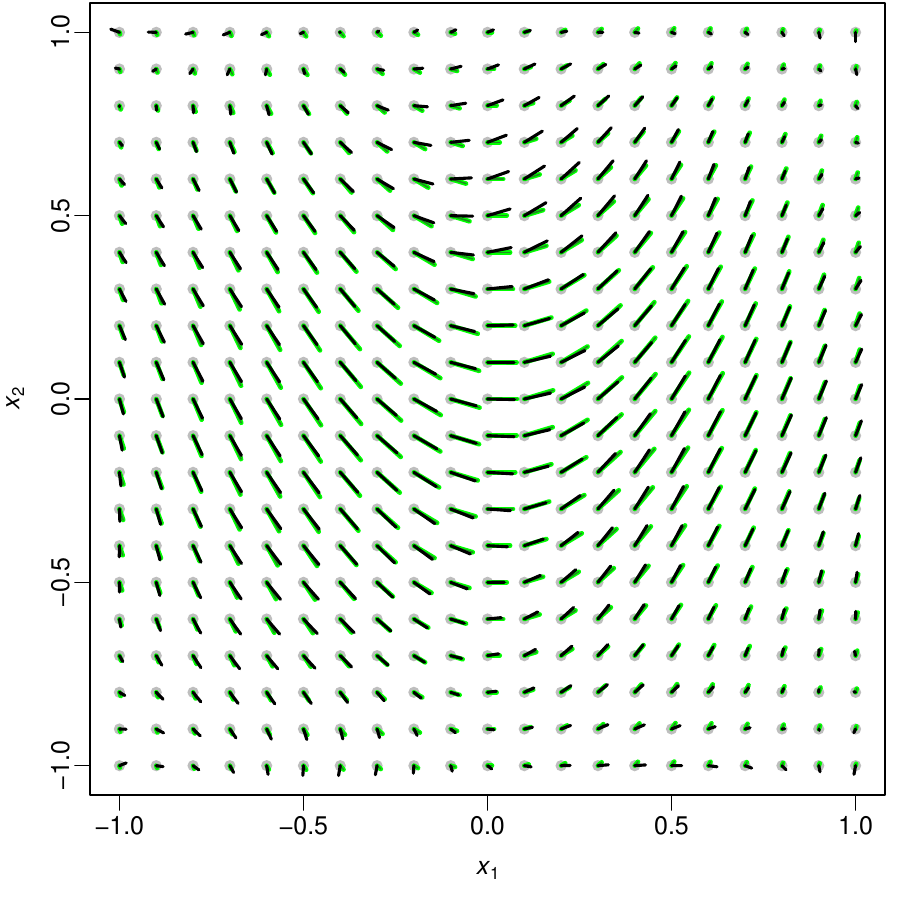}
\hfill

\caption{Estimated regression functions $\bmu(\hat{\bsf})$ based on local constant (left panel) and local quadratic (right panel) modelling in case of $N = 400$. In addition to the fit $\bmu(\hat{\bsf})$ (black) one sees $\bmu(\bsf^*)$ (green).}
\label{fig:GLM_fhat}
\end{figure}

Next we performed a little simulation study with $100$ simulations of such a data set. With these simulations we estimated for the three types of local polynomial estimates and different values of $N$ the following quantities:
\begin{align*}
	\mathrm{BIAS} \
	&:= \ \sqrt{ 441^{-1} \sum_{\x_o \in \XX_o}
		\bigl\| \Ex \bmu(\hat{\bsf}(\x_o)) - \bmu(\bsf^*(\x_o)) \bigr\|^2 } , \\
	\mathrm{SD} \
	&:= \ \sqrt{ 441^{-1} \sum_{\x_o \in \XX_o}
		\Ex \bigl[ \bigl\| \bmu(\hat{\bsf}(\x_o)) - \Ex \bmu(\hat{\bsf}(\x_o)) \bigr\|^2 \bigr] } , \\
	\mathrm{RMSE} \
	&:= \ \sqrt{ 441^{-1} \sum_{\x_o \in \XX_o}
		\Ex \bigl[ \bigl\| \bmu(\hat{\bsf}(\x_o)) - \bmu(\bsf^*(\x_o)) \bigr\|^2 \bigr] }
		\ = \ \sqrt{\mathrm{BIAS}^2 + \mathrm{SD}^2} .
\end{align*}
Table~\ref{tab:GLM} contains the results. These numbers show that we have similar effects as in local polynomial least squares regression. As the parameter $N$ increases, the bias increases while the variability decreases. For fixed $N$, the bias of local constant estimation is larger than the one for local linear estimation, and the latter is larger than the one for local quadratic estimation. The variability however, measured by $\mathrm{SD}$, increases with the model complexity.

\begin{table}
\[
	\begin{array}{|c||c|c|c||c|c|c||c|c|c|}
	\hline
		& \multicolumn{3}{|l||}{\text{constant}}
		& \multicolumn{3}{|l||}{\text{linear}}
		& \multicolumn{3}{|l|}{\text{quadratic}} \\
	N	& \mathrm{BIAS} & \mathrm{SD} & \mathrm{RMSE}
		& \mathrm{BIAS} & \mathrm{SD} & \mathrm{RMSE}
		& \mathrm{BIAS} & \mathrm{SD} & \mathrm{RMSE} \\
	\hline\hline
	100 & .032 & .069 & .076
		& .016 & .088 & .089
		& .013 & .130 & .130 \\
	\hline
	200 & .050 & .049 & .070
		& .029 & .065 & .071
		& .011 & .095 & .096 \\
	\hline
	300 & .066 & .040 & .077
		& .041 & .056 & .069
		& .014 & .079 & .080 \\
	\hline
	400 & .080 & .035 & .087
		& .054 & .050 & .074
		& .021 & .070 & .073 \\
	\hline
	500 & .091 & .033 & .097
		& .066 & .047 & .081
		& .025 & .065 & .070 \\
	\hline
	600 & .101 & .030 & .105
		& .075 & .045 & .088
		& .032 & .061 & .068 \\
	\hline
	700 & .109 & .028 & .113
		& .086 & .042 & .096
		& .035 & .058 & .067 \\
	\hline
	800 & .117 & .026 & .120
		& .095 & .040 & .103
		& .039 & .055 & .068 \\
	\hline
	\end{array}
\]
\caption{Estimated error measures for different local polynomial estimators and values $N$.}
\label{tab:GLM}
\end{table}

In Appendix~\ref{app:Further.details.numerical.example}, we show also graphical displays of the estimated pointwise bias
\[
	\Ex \bmu(\hat{\bsf}(\x_o)) - \bmu(\bsf^*(\x_o)) .
\]

\section{Smoothing Axial Data on a Sphere}
\label{sec:Europa}

We return to a data set consisting of points $(\Xi,\Vi)$, $1 \le i \le n$, where $\Xi \in \S^2$ and $\Vi \in \S^2 \cap \Xi^\perp$ representing the axis $\R \Vi$ in the tangent plane of the sphere at $\Xi$. By means of these observations, we want to fit for any point $\x_o \in \S^3$ a Bingham distribution $\Bh(\x_o,\hat{\bsf}(\x_o))$, where $\hat{\bsf}(\x_o) \in \x_o^\perp$. For any vector $\wb \in \x_o^\perp$, the Bingham distribution $\Bh(\x_o,\wb)$ is defined as follows: Let $\bs{e}_1, \bs{e}_2$ be an orthonormal basis of $\x_o^\perp$ such that the orthogonal matrix $[\x_o,\bs{e}_1,\bs{e}_2]$ has determinant $1$. Writing $\wb = \kappa \cos(\beta) \bs{e}_1 + \kappa \sin(\beta) \bs{e}_2$ for some $\kappa \ge 0$ and $\beta \in [0,\pi)$, the distribution $\Bh(\x_o,\wb)$ describes the distribution of the random vector
\[
	\V \ = \ \cos(\tilde{V}) \bs{e}_1 + \sin(\tilde{V}) \bs{e}_2 \ \in \ \x_o^\perp ,
\]
where $\tilde{V} \in [0,2\pi)$ is a random variable following the density \eqref{eq:Bingham.angular}. To relate the observations $(\Xi,\Vi)$ to the point $\x_o \in \S^2$ and its tangent plane $\x_o^\perp$, we need a suitable transformation which is described in the next two subsections.

\subsection{A stereographic projection}

We start with the particular reference point $\x_o = [1,0,0]^\top$. Any point $\x \in \R^3$ with $x_1 > -1$ is mapped to a point $P(\x) \in \R^2$ as follows: One moves $\x$ along the straight line connecting the reference point's antipode $[-1,0,0]^\top$ and $\x$ such that it hits the hyperplane $\{\z \in \R^3 : z_1 = 1\}$. That is, we need $\nu(\x) \in \R$ such that $(1 - \nu(\x)) [-1,0,0]^\top + \nu(\x) \x = [1, P(\x)^\top]^\top$. This leads to
\[
	P(\x) \ := \ \nu(\x) \begin{bmatrix} x_2 \\ x_3 \end{bmatrix}
	\quad\text{with}\quad
	\nu(\x) \ := \ \frac{2}{1 + x_1} .
\]
If restricted to $\XX := \S^2 \setminus \{[-1,0,0]^\top\}$, the mapping $P : \XX \to \R^2$ is a diffeomorphism, and its inverse mapping $P^{-1} : \R^2 \to \XX$ is given by
\[
	P^{-1}(\z) \ := \ \begin{bmatrix} 2 \omega(\z) - 1 \\ \omega(\z) z_1 \\ \omega(\z) z_2 \end{bmatrix}
	\quad\text{with}\quad
	\omega(\z) \ := \ \frac{4}{4 + \|\z\|^2} .
\]
For $\x \in \XX$, any vector $\vb \in \x^\perp$ can be viewed as the derivative of a smooth curve in $\S^2$ passing through $\x$, so it is natural to consider $DP(\x) \vb$ with the Jacobian matrix $DP(\x) \in \R^{2\times 3}$. Elementary calculations show that
\[
	DP(\x) \ = \ \nu(\x) \bs{A}(\x)
	\quad\text{with}\quad
	\bs{A}(\x) \ := \ \begin{bmatrix}
		- x_2/(1 + x_1) & 1 & 0 \\
		- x_3/(1 + x_1) & 0 & 1
	\end{bmatrix} .
\]
It is well-known that $P$ is a conformal mapping in the sense that $DP(\x) : \x^\perp \to \R^2$ preserves angles. Precisely, one can show that for arbitrary $\vb \in \x^\perp$,
\[
	\|\bs{A}(\x) \vb\|^2 \ = \ \|\vb\|^2 .
\]
Thus we relate the pair $(\x,\vb)$ to the reference point $[1,0,0]^\top$ and its tangent plane by mapping it to the pair
\[
	\bigl( P(\x), \bs{A}(\x)\vb \bigr) \in \R^2 \times \R^2 .
\]

Figure~\ref{fig:Stereographic} illustrates this mapping. The upper left panel shows an artificial set of points $(\Xi,\Vi)$, $1 \le i \le n$, where $\Xi \in \S^2$ and $\Vi \in \S^2 \cap \Xi^\perp$. The axes correspond to the second and third components of all vectors. Note that the data points $\Xi$ (black dots) are situated on a finite collection of circles on $\S^2$, and each axis $\Vi$ (indicated by a green line connecting $\Xi \pm 0.1 \cdot \Vi$) is perpendicular to the circle containing $\Xi$. The other panels show the projected points $\bigl( P(\Xi), \bs{A}(\Xi)\Vi \bigr)$ in squares of different size centered around $\bs{0}$. One sees clearly the conformal nature of this projection and the well-known fact that circles are mapped onto circles.

\begin{figure}
\includegraphics[width=0.48\textwidth]{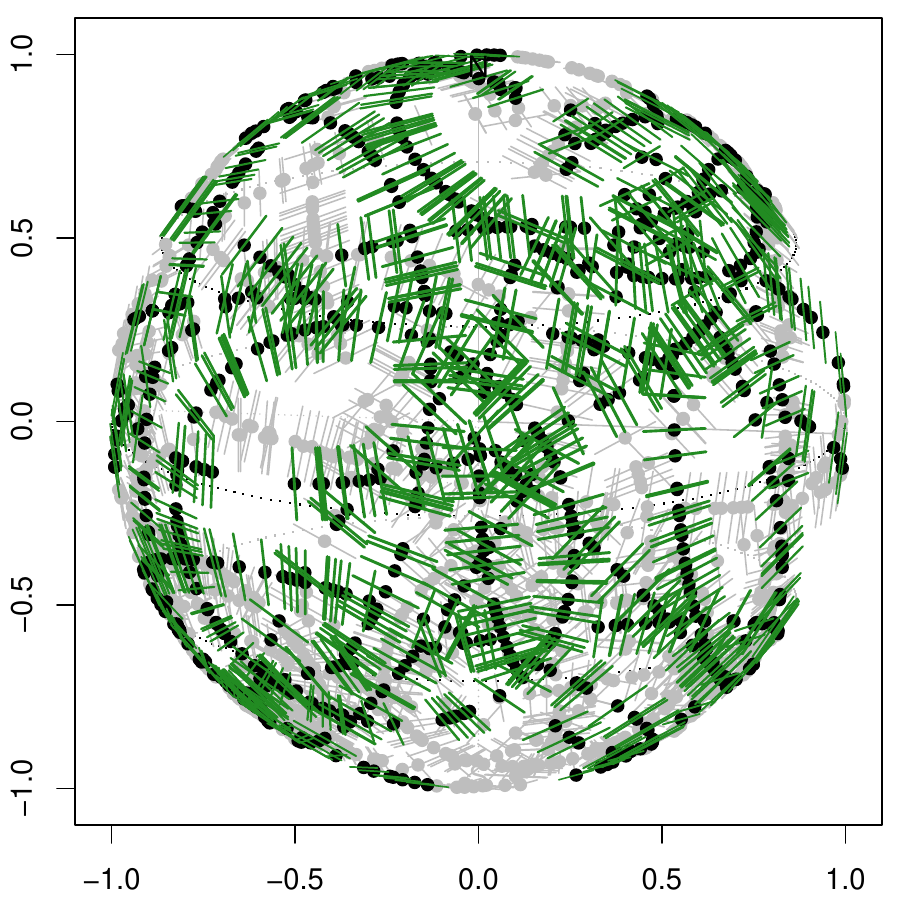}
\hfill
\includegraphics[width=0.48\textwidth]{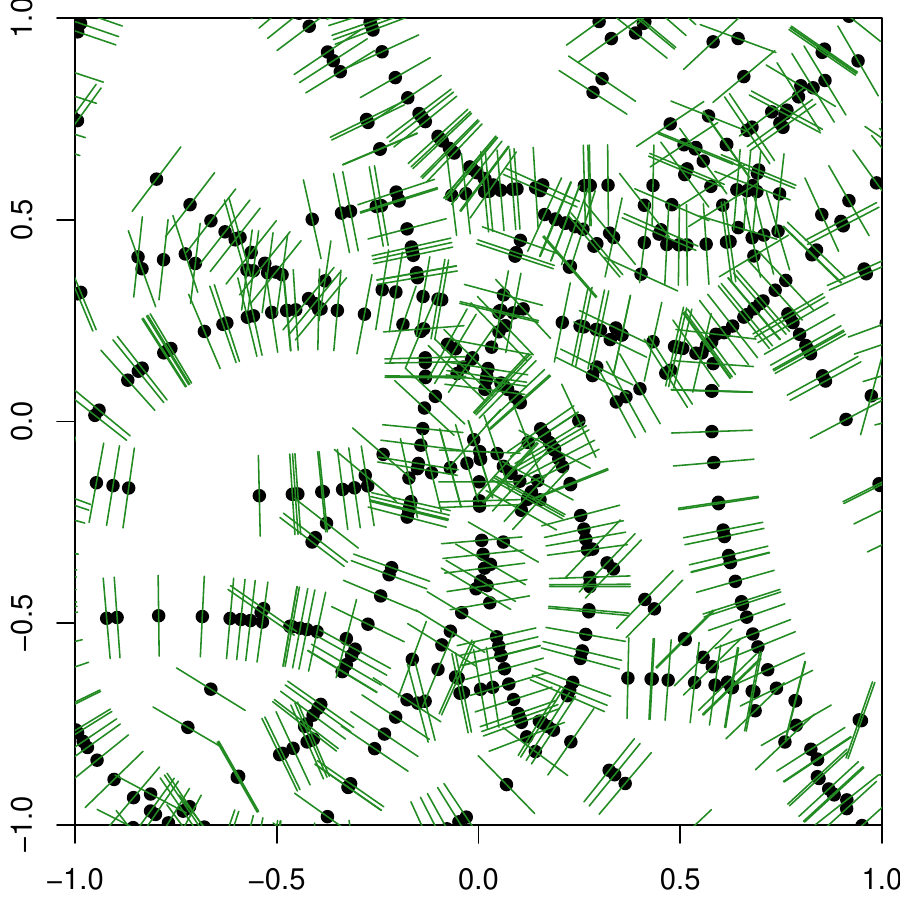}

\includegraphics[width=0.48\textwidth]{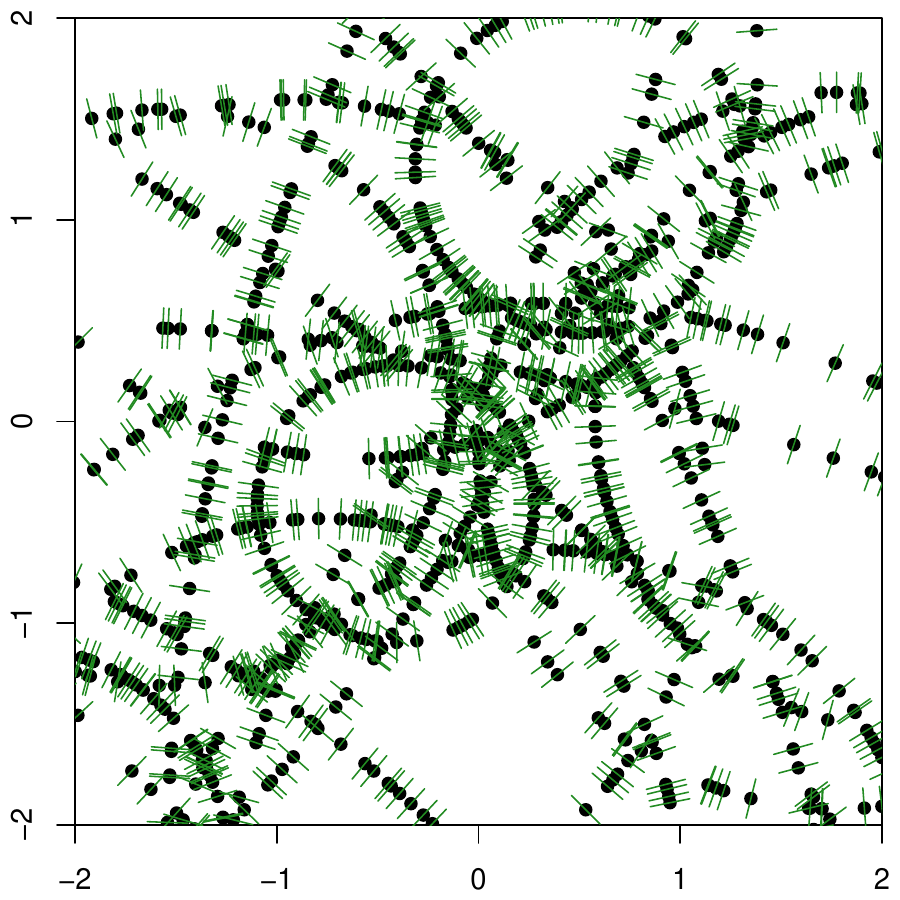}
\hfill
\includegraphics[width=0.48\textwidth]{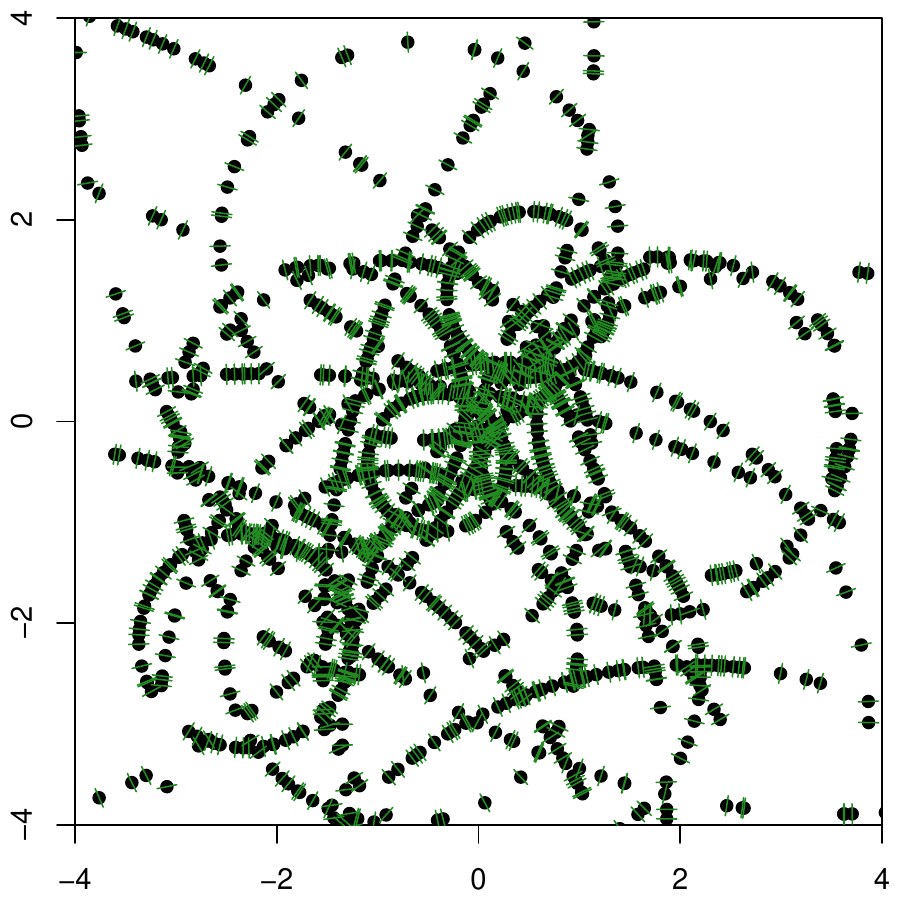}

\caption{Stereographic projections of an artificial data set. Top left: full sphere. Others: stereographic projection at different scales, showing that it is conformal.}
\label{fig:Stereographic}
\end{figure}

\subsection{Smoothing the data for an arbitrary reference point}

For an arbitrary reference point $\x_o \in \S^2$, we choose an orthogonal matrix $\bs{B}_o$ with determinant $1$ such that $\bs{B}_o\x_o = [1,0,0]^\top$. Then all data points $(\Xi,\Vi)$ with $\Xi \ne - \x_o$ are transformed into the pairs
\[
	\bigl( P(\bs{B}_o\Xi), \bs{A}(\bs{B}_o\Xi) \bs{B}_o \Vi \bigr) \in \ \R^2 \times \S^1 .
\]
Since $\bs{A}(\bs{B}_o\Xi) \bs{B}_o \Vi$ represents an axis in $\R^2$, we replace it by means of the mapping $\y(\cdot)$ in \eqref{eq:def.y(v)} with the direction $\y(\bs{A}(\bs{B}_o\Xi) \bs{B}_o \Vi) \in \S^1$. Then we apply the local polynomial estimators described in Section~\ref{sec:Regression} to the data pairs $\bigl( P(\bs{B}_o\Xi), \y(\bs{A}(\bs{B}_o\Xi) \bs{B}_o \Vi) \bigr) \in \R^2 \times \S^1$ and $\bs{0}$ in place of $(\Xi,\Yi)$ and $\bs{x}_o$. This yields an estimated vMF parameter $\hat{\z} \in \R^2$. Writing $\hat{\z} = \hat{\kappa} [\cos(2\hat{\beta}), \sin(2\hat{\beta})]^\top$ with $\hat{\kappa} \ge 0$ and $\hat{\beta} \in [0,\pi)$, we replace $\hat{\z}$ with the axis parameter $\hat{\wb} := \hat{\kappa} [\cos(\hat{\beta}), \sin(\hat{\beta})]^\top$ and transform it back to the Bingham parameter
\[
	\hat{\bsf}(\x_o) \ := \ \bs{B}_o^\top \begin{bmatrix} 0 \\ \hat{\wb} \end{bmatrix} \ \in \ \x_o^\perp .
\]
This leads to the estimate $\Bh(\x_o,\hat{\bsf}(\x_o))$ of $\Bh(\x_o,\bsf^*(\x_o))$. Note that any choice of $\bs{B}_o$ would yield the same estimate $\Bh(\x_o,\hat{\bsf}(\x_o))$.

\subsection{Some numerical results for Europa}

We analyzed the data from Europa and estimated Bingham distributions $\Bh(\x_o,\hat{\bsf}(\x_o))$ at many different locations $\x_o$. Precisely, within each of the $19$ regions, we chose an evenly spread subset of size up to $200$ of all observed locations $\Xi$ there. Now we show some results for the particular region in Figure~\ref{fig:Europa_processed}. We analyzed the data with $N = 50, 100, 150, 200, 300, 400$. As explained later, there is some evidence for overfitting (undersmooting) when $N = 50$ and underfitting (oversmoothing) when $N = 400$. For $N = 100, 150, 200, 300$, the pictures look similar and lead to the same conclusions. Figure~\ref{fig:Europa_fhat_200} shows the estimators resulting from local linear and local quadratic models with $N = 200$. For the $200$ selected points $\x_o$, the fitted Bingham distributions $\Bh(\x_o,\hat{\bsf}(\x_o))$ are represented by a blue line segment connecting the points
\[
	\x_o \pm 0.1 \cdot \tilde{\gamma}_2'(\hat{\kappa}(\x_o)) \hat{\bs{u}}(\x_o) ,
\]
where $\hat{\kappa}(\x_o)$ and $\hat{\bs{u}}(\x_o)$ are the norm and direction of $\hat{\bsf}(\x_o)$, respectively. Recall that for each estimate $\hat{\bsf}(\x_o)$, a new stereographic projection with reference point $\x_o$ was used.

In the upper part of the chosen region, the direction of the ice cracks seems to be rather chaotic (i.e.\ uniform), whereas in the middle and lower parts, there are preferred axis directions. These findings supplement observations by the physicists \citep{HaslebacherPSJ2025} who analysed the region divided into geological chaos terrain (upper part) and ridged plains (middle and lower part) by \cite{Leonard2024}. It is an interesting finding that in the chaos terrain, no ``order'' is found in the sense of a preferred crack direction. To which extent tidal forces enforce cracking is still under debate.

\begin{figure}
\includegraphics[width=0.49\textwidth]{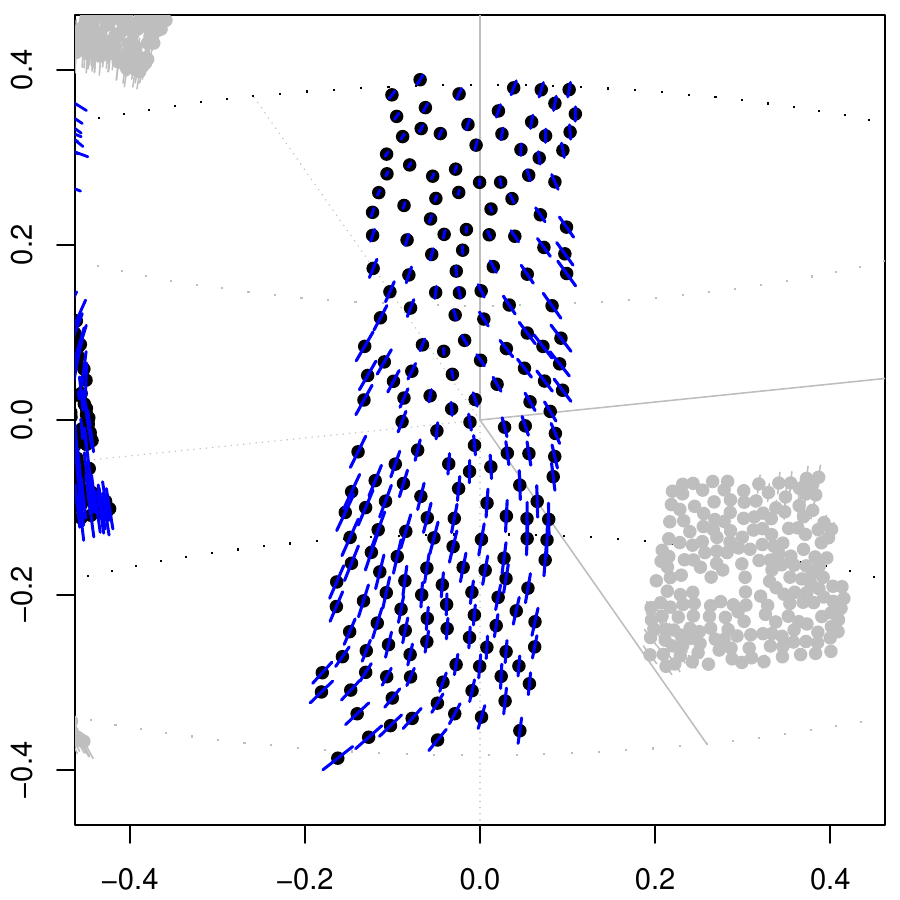}
\hfill
\includegraphics[width=0.49\textwidth]{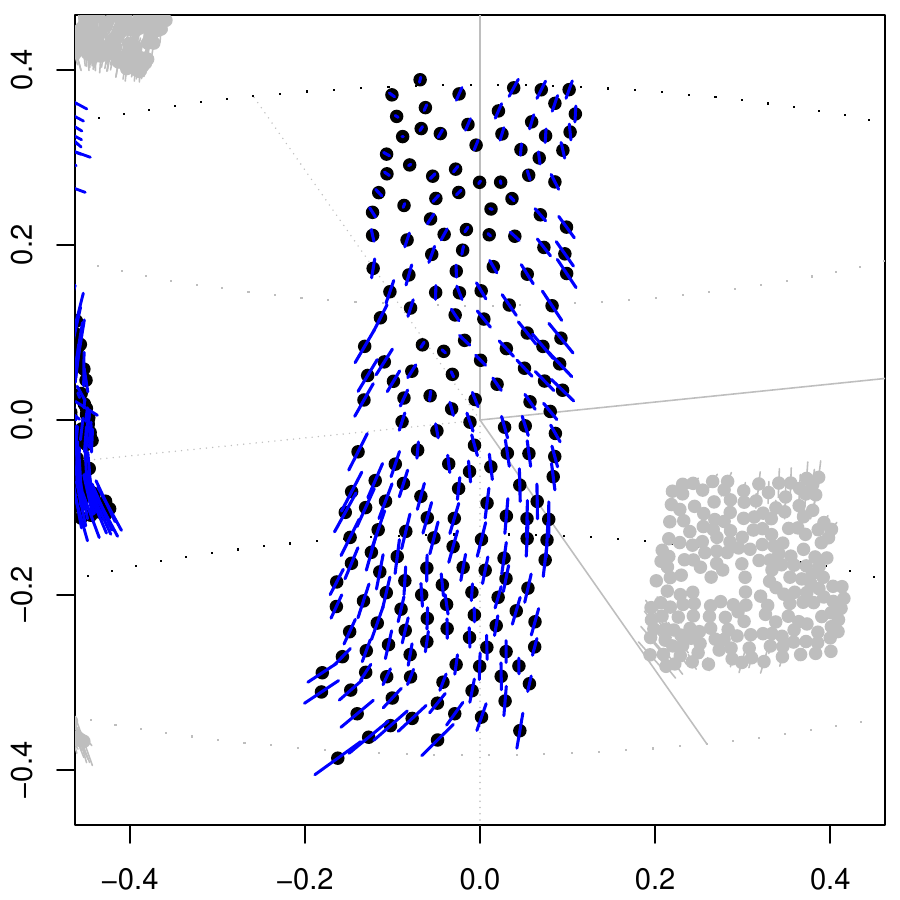}

\caption{Smoothed data from Europa with $N = 200$: Fitted regression function at $200$ locations via local linear (left panel) and local quadratic (right panel) models.}
\label{fig:Europa_fhat_200}
\end{figure}

An interesting open problem for this type of data analysis is regression diagnostics. As a first attempt, we propose to check the plausibility of the estimators by comparing two diagnostic quantities which can be seen as a surrogate for R-squared in least squares regression: Let $\XX_o = \{\Xi : i \in J_o\}$ be the set of $m_o = 200$ points $\x_o$ in the given region at which we compute $\hat{\bsf}(\x_o)$. Then we set
\[
	R_{\rm model}^2 \ := \ \frac{1}{n_o} \sum_{\x_o \in \XX_o} \tilde{\gamma}_2'(\hat{\kappa}(\x_o))^2
	\ \in \ [0,1] .
\]
This number measures how well the response vectors $\Vi$ can be predicted by the estimated regression model, assuming the latter to be true. Indeed, note that by Proposition~\ref{prop:Moments.Bh.w},
\[
	\tilde{\gamma}_2'(\kappa)^2 \
	= \ 1 - 2 \Ex \bigl( \| \V\V^\top - \Psi(\x_o, \kappa\bs{u}) \|_F^2 \bigr)
\]
for $\V \sim \Bh(\x_o,\kappa \bs{u})$, $\kappa \ge 0$, $\bs{u} \in \mathbb{S}^2 \cap \x_o^\perp$, where $\Psi(\x_o,\kappa \bs{u}) = \Ex(\V\V^\top)$. Thus an alternative measure of determination is given by
\[
	R_{\rm residual}^2 \
	:= \ \frac{1}{n_o} \sum_{i \in J_o}
		\bigl( 1 - 2 \bigl\| \V_{\!\!i}^{}\Vi^\top - \Psi(\Xi,\hat{\bsf}(\Xi)) \bigr\|_F^2 \bigr) .
\]
The ratio $R_{\rm residual}^2 / R_{\rm model}^2$ should be close to one, and higher or smaller values indicate over- or underfitting, respectively.

Not surprisingly, for $N = 50$, the values of $R_{\rm model}^2$ and $R_{\rm residual}^2$ are the largest, but the ratio $R_{\rm residual}^2/R_{\rm model}^2$ is $1.266$ for local constant models, which indicates overfitting. By way of contrast, for $N = 400$, the same ratio is $0.952$, indicating slight underfitting. For $N = 200$, the diagnostics are $(R_{\rm residual}^2, R_{\rm model}^2) = (0.117, 0.104)$ for local linear models (ratio = 1.122) and $(R_{\rm residual}^2, R_{\rm model}^2) = (0.137, 0.119)$ for local quadratic models (ratio = 1.152).

\subsection{Final comments}

The starting point for the present manuscript was the data from Europa. But as mentioned before, analogous questions arise with observations from Ganymede \citep{Rossi2020}. In the latter case, the axes refer to grooves (i.e.\ tectonic deformations) rather than cracks. We are currently experimenting with data from Ganymede and planning to compare our results with previous ones. There are also potential applications for planets with non-icy surfaces, e.g.\ faults in the solid surface of Mercury \citep{Watters2001} or Venus \citep{Sabbeth2023}.

In the derivation of the regression and smoothing methods, we used the standard setting of stochastically independent responses, given the covariates. For the application to Europa or Ganymede, this assumption is certainly not satisfied, for instance, because very long ice cracks lead to several observations $(\Xi,\Vi)$ with different locations $\Xi$ but similar axis directions $\Vi$. Thus, the methods presented here are merely exploratory and serve to find and visualize patterns in the data, without statistical conclusions such as standard errors, p-values or confidence bounds.

The simulations and data analyses were carried out with the programming language $\mathsf{R}$ \citep{R2023}. The code we used is availabe upon request.

\appendix

\section*{Appendix}

\section{Computational aspects of vMF distributions}
\label{app:Computational.aspects.vMF}

\paragraph{The function $\gamma$.}
If $\Yzero$ is a random vector with uniform distribution $M = \vMF(\bs{0})$ on $\S^{d-1}$, then for any unit vector $\vb \in \S^{d-1}$, the random variable $U_0 := \Yzero^\top \vb \in (-1,1)$ has density $h_d$,
\begin{equation}
\label{eq:density.hd}
	h_d(u) \ := \ C_d (1 - u^2)^{a_d - 1} , \quad u \in (-1,1) ,
\end{equation}
where $a_d := (d-1)/2$ and $C_d := 1/B(1/2,a_d)$ with the beta function $B(\cdot,\cdot)$. This implies that $\gamma(\z) = \log \int e_{}^{\z^\top\y} \, M(\d\y)$ can be written as
\[
	\gamma(\z) \ = \ \tilde{\gamma}_d(\|\z\|) ,
\]
where
\[
	\tilde{\gamma}_d(t) \ := \ \log G_d(t) , \quad
	G_d(t) \ := \ \int_{-1}^1 e^{tu} h_d(u) \, \d u
\]
for $t \in \R$. The function $G_d$ can also be written as $G_d(t) = t^{1-d/2} J_{d/2-1}(t)$, where $J_\beta$ stands for the modified Bessel function of the first kind, see \cite{Mardia_Jupp_2000}. In the special case $d = 3$, $h_3 \equiv 1/2$ and $G_3(t) = \sinh(t)/t$.

\begin{proof}[\bf Proof of \eqref{eq:density.hd}]
Let $\bs{Z} \in \R^d$ be a standard Gaussian random vector. Then, $\Y_0^\top \vb$ has the same distribution as $Z_1/\|\bs{Z}\|$, and $Z_1^2/\|\bs{Z}\|^2$ follows $\mathrm{Beta}(1/2,a_d)$. Denoting the density of the latter distribution with $b_d$, it follows from the symmetry of the distribution of $Z_1/\|\bs{Z}\|$ that for $u \in (0,1)$,
\[
	h_d(u) \
	= \ - \frac{\d}{\d u} \Pr(Z_1/\|\bs{Z}\| > u) \
	= \ - \frac{\d}{\d u} 2^{-1} \Pr(Z_1^2/\|\bs{Z}\|^2 > u^2) \
	= \ u b_d(u^2) ,
\]
whereas for $u \in (-1,0)$,
\[
	h_d(u) \
	= \ \frac{\d}{\d u} \Pr(Z_1/\|\bs{Z}\| < u) \
	= \ \frac{\d}{\d u} 2^{-1} \Pr(Z_1^2/\|\bs{Z}\|^2 > u^2) \
	= \ |u| b_d(u^2) .
\]
In both cases the result equals $C_d (1 - u^2)^{a_d-1}$, because $b_d(x) = C_d x^{-1/2} (1 - x)^{a_d-1}$ for $x \in (0,1)$.
\end{proof}

With our regression applications in mind, we decided to calculate $G_d$ directly, rather than using the detour via Bessel functions. A Taylor series for $G_d$ is given by
\begin{equation}
\label{eq:Taylor.Gd}
	G_d(t) \ = \ \sum_{k=0}^\infty c_{d,k} t^{2k}
	\quad\text{with}\quad
	c_{d,k} \
	:= \ \frac{2^{-2k} \, \Gamma(d/2)}{k! \, \Gamma(k + d/2)} .
\end{equation}

\begin{proof}[\bf Proof of \eqref{eq:Taylor.Gd}]
Since $h_d$ is even, we may rewrite $G_d(t) = \int_{-1}^1 e^{tu} h_d(u) \, \d u$ as
\begin{align*}
	G_d(t) \
	&= \ C_d \int_{-1}^1 \cosh(tu) (1 - u^2)^{a_d-1} \, \d u \\
	&= \ 2 C_d \int_0^1 \cosh(tu) (1 - u^2)^{a_d-1} \, \d u \\
	&= \ 2 C_d \sum_{k=0}^\infty \frac{t^{2k}}{(2k)!}
		\int_0^1 u^{2k} (1 - u^2)^{a_d-1} \, \d u \\
	&= \ C_d \sum_{k=0}^\infty \frac{t^{2k}}{(2k)!}
		\int_0^1 s^{k-1/2} (1 - s)^{a_d-1} \, \d s
		& (s = u^2, 2 \, \d u = s^{-1/2} \, \d s) \\
	&= \ \sum_{k=0}^\infty c_{d,k} t^{2k} ,
\end{align*}
where
\begin{align*}
	c_{d,k} \
	:= \ &\frac{1}{(2k)!} \, \frac{B(k+1/2,a_d)}{B(1/2,a_d)} \\
	= \ &\frac{1}{(2k)!} \, \frac{\Gamma(k+1/2)}{\Gamma(1/2)}
		\, \frac{\Gamma(d/2)}{\Gamma(k + d/2)} \\
	= \ &\frac{2^{-(2k-1)} \, \Gamma(2k)}{(2k)! \, \Gamma(k)}
		\, \frac{\Gamma(d/2)}{\Gamma(k + d/2)} \\
	= \ &\frac{2^{-(2k-1)} (2k-1)!}{(2k)! (k-1)!}
		\, \frac{\Gamma(d/2)}{\Gamma(k + d/2)} \\
	= \ &\frac{2^{-2k} \, \Gamma(d/2)}{k! \, \Gamma(k + d/2)} ,
\end{align*}
where the second step is a consequence of Legendre's well-known duplication formula for the gamma function.
\end{proof}

\paragraph{Mean and Covariance of $\vMF(\z)$.}
Note first that symmetry considerations reveal that $\Yzero \sim M = \vMF(\bs{0})$ satisfies
\[
	\Ex(\Yzero) \ = \ \bs{0}
	\quad\text{and}\quad
	\Ex(\Yzero^{}\Yzero^\top) \ = \ d^{-1} \bs{I}_d .
\]
An arbitrary $\z \in \R^d$ may be written as $\z = t \vb$ with $t = \|\z\|$ and some $\vb \in \S^{d-1}$. Then, $\Yz \sim \vMF(\bs{z})$ may be represented as
\[
	\Yz \ = \ U_t \vb + \sqrt{1 - U_t^2} \, \bs{S}_{\vb} ,
\]
where $U_t \in (-1,1)$ and $\bs{S}_{\vb} \in \S^{d-1}$ are stochastically independent, $U_t$ has density $e^{tu - \tilde{\gamma}_d(t)} h_d(u)$ at $u \in (-1,1)$, and $\bs{S}_{\vb}$ is uniformly distributed on the unit sphere $\S^{d-1} \cap \vb^\perp$ of the $(d-1)$-dimensional space $\vb^\perp$. In particular, one can deduce from the properties of $\Yzero$ that $\Ex(\bs{S}_{\vb}) = \bs{0}$ and $\Ex(\bs{S}_{\vb}^{} \bs{S}_{\vb}^\top) = (d-1)^{-1} (\bs{I}_d - \vb \vb^\top)$. Together with independence of $U_t$ and $\bs{S}_{\vb}$ we obtain the following formulae for $\bmu(\z)$ and $\bSigma(\z)$:
\begin{align}
\label{eq:mu(z)}
	\bmu(\z) \
	&= \ \Ex(U_t) \vb , \\
\label{eq:Sigma(z)}
	\bSigma(\z) \
	&= \ \Var(U_t) \, \vb\vb^\top + \frac{1 - \Ex(U_t^2)}{d-1} (\bs{I}_d - \vb\vb^\top) .
\end{align}
Since the distributions of $U_t$, $t \in \R$, form an exponential family with natural parametrization, one can write $\Ex(U_t) = \tilde{\gamma}_d'(t)$ and $\Var(U_t) = \tilde{\gamma}_d''(t)$, which leads to \eqref{eq:mu.vMF}. The fact that $\tilde{\gamma}_d' : [0,\infty) \to [0,1)$ is bijective is well-known and follows also implicitly from Corollary~\ref{cor:moments.Ut}. Note also that
\[
	\Ex(U_t^\ell) \
	= \ \int_{-1}^1 u^\ell e^{tu} h_d(u) \, \d u \Big/ \int_{-1}^1 e^{tu} h_d(u) \, \d u \
	= \ G_d^{(\ell)}(t) / G_d(t)
\]
with $G_d^{(\ell)}$ denoting the $\ell$-th derivative of $G_d$. Thus,
\begin{align*}
	\bmu(\z) \
	&= \ \frac{G_d'}{G_d}(t) \, \vb , \\
	\bSigma(\z) \
	&= \ \Bigl( \frac{G_d''}{G_d}(t) - \Bigl(\frac{G_d'}{G_d}(t) \Bigr)^2 \Bigr) \, \vb\vb^\top
		+ \frac{1}{d-1} \Bigl( 1 - \frac{G_d''}{G_d}(t) \Bigr) (\bs{I}_d - \vb\vb^\top) .
\end{align*}

\paragraph{Numerical computation of $G_d^{(\ell)}(t)$ for moderate values of $t$.}
Note first that
\[
	\frac{c_{d,k+1} t^{2(k+1)}}{c_{d,k} t^{2k}} \
	= \ \frac{(t/2)^2}{(k+1)(k+d/2)}
		\quad\text{for} \ k \ge 0 .
\]
Consequently,
\[
	\sum_{k=k_o+1}^\infty c_{d,k} t^{2k} \ \le \ \epsilon
	\quad\text{as soon as}\quad
	c_{d,k_o} t^{2k_o} \ \le \ \epsilon , \ \
	\frac{(t/2)^2}{(k_o+1)(k_o+d/2)} \ \le \ 1/2 .
\]
This allows the computation of $G_d(t)$ with arbitrary prescribed precision.

Any derivative $G_d^{(\ell)}$ can be expressed in terms of $G_{d+2m}$, $m=1,\ldots,\ell$. Indeed,
\begin{equation}
\label{eq:Gd1}
	G_d'(t) \
	= \ \frac{t}{d} \, G_{d+2}(t) .
\end{equation}
From this formula, one can proceed inductively. In particular,
\[
	G_d''(t) \
	= \ \frac{1}{d} \, G_{d+2}(t) + \frac{t^2}{d(d+2)} \, G_{d+4}(t) .
\]

\begin{proof}[\bf Proof of \eqref{eq:Gd1}]
Starting from the Taylor series of $G_d$,
\begin{align*}
	G_d'(t) \
	&= \ \sum_{k=1}^\infty 2k c_{d,k} t^{2k-1} \\
	&= \ \sum_{k=1}^\infty \frac{2k 2^{-2k} \,\Gamma(d/2)}{k! \, \Gamma(k + d/2)} \, t^{2k-1} \\
	&= \ \sum_{k=1}^\infty \frac{2^{-2k+1} \,\Gamma(d/2)}{(k-1)! \, \Gamma(k + d/2)} \, t^{2k-1} \\
	&= \ \frac{t \, \Gamma(d/2)}{2\,\Gamma(d/2 + 1)}
		\sum_{k=1}^\infty \frac{2^{-2(k-1)}\,\Gamma((d+2)/2)}{(k-1)!\,\Gamma((k-1) + (d+2)/2)}
			\, t^{2(k-1)} \\
	&= \ \frac{t}{d} \, G_{d+2}(t) .
\end{align*}\\[-5ex]
\end{proof}

\paragraph{Computation of $G_d(t)$, $\Ex(U_t)$ and $\Ex(U_t^2)$ for large values of $t$.}
For large values of $t$, computing $G_d(t)$, $G_{d+2}(t)$ and $G_{d+4}(t)$ via their series expansion becomes problematic. Here we resort to the following approximation formulae.

\begin{Lemma}
\label{lem:gammaz.muz.Sigmaz}
Let $\z = t \vb$ with $t > 0$ and $\vb \in \S^{d-1}$. Then as $t \to \infty$ and uniformly in $\vb$,
\begin{align*}
	\gamma(\z) \
	&= \ \log \Bigl( \frac{2^{a_d-1} \Gamma(d/2)}{\Gamma(1/2)} \Bigr)
		+ t - a_d \log(t) - \frac{a_d(a_d-1)}{2t} \Bigl( 1 + \frac{1}{2t} + O(t^{-2}) \Bigr) , \\
	\bmu(\z) \
	&= \ \Bigl( 1 - \frac{a_d}{t} \Bigl( 1 - \frac{a_d-1}{2t} + O(t^{-2}) \Bigr) \Bigr) \, \vb , \\
	\bSigma(\z) \
	&= \ \frac{a_d}{t^2} \Bigl( 1 - \frac{a_d-1}{t} + O(t^{-2}) \Bigr) \, \vb\vb^\top
		+ \frac{1}{t} \Bigl( 1 - \frac{a_d}{t} + O(t^{-2}) \Bigr) (\bs{I}_d - \vb\vb^\top) .
\end{align*}
\end{Lemma}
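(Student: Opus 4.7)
The strategy is to expand $\tilde\gamma_d(t) = \log G_d(t)$ by Laplace's method and then read off the expansions of $\bmu$ and $\bSigma$ via the chain rule. Since $\gamma(\z) = \tilde\gamma_d(\|\z\|)$ with $\nabla\|\z\| = \vb$ and $D(\z/\|\z\|) = t^{-1}(\bs{I}_d - \vb\vb^\top)$, one has
\[
\bmu(\z) = \tilde\gamma_d'(t)\,\vb, \qquad
\bSigma(\z) = \tilde\gamma_d''(t)\,\vb\vb^\top + \frac{\tilde\gamma_d'(t)}{t}(\bs{I}_d - \vb\vb^\top),
\]
and uniformity in $\vb$ is automatic because everything depends on $\z$ only through $t$.

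For the expansion of $G_d$, I would substitute $u = 1 - s/t$ in $G_d(t) = C_d\int_{-1}^1 e^{tu}(1-u^2)^{a_d-1}\,\d u$ to get
\[
G_d(t) = C_d\, 2^{a_d-1}\, e^t\, t^{-a_d}\int_0^{2t} e^{-s} s^{a_d-1}\bigl(1 - s/(2t)\bigr)^{a_d-1}\,\d s,
\]
bound the piece $s\in[t,2t]$ by $O(e^{-t/2})$, and on $[0,t]$ Taylor-expand $(1-s/(2t))^{a_d-1}$ to second order in $s/t$ and integrate term by term via $\int_0^\infty e^{-s} s^{a_d-1+j}\,\d s = \Gamma(a_d+j)$. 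Together with the identity $C_d\Gamma(a_d) = \Gamma(d/2)/\Gamma(1/2)$ this yields
\[
G_d(t) = A_d\, e^t\, t^{-a_d}\Bigl(1 - \tfrac{a_d(a_d-1)}{2t} + \tfrac{a_d(a_d-1)(a_d-2)(a_d+1)}{8t^2} + O(t^{-3})\Bigr)
\]
with $A_d := 2^{a_d-1}\Gamma(d/2)/\Gamma(1/2)$.

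Taking logarithms via $\log(1+x) = x - x^2/2 + O(x^3)$ produces the expansion for $\tilde\gamma_d(t)$; the key arithmetic check is the cancellation
\[
\tfrac{a_d(a_d-1)(a_d-2)(a_d+1)}{8} - \tfrac{1}{2}\Bigl(\tfrac{a_d(a_d-1)}{2}\Bigr)^2 = -\tfrac{a_d(a_d-1)}{4},
\]
which produces exactly the factor $1 + 1/(2t)$ in the first display of the lemma. Differentiating this series termwise gives $\tilde\gamma_d'(t) = 1 - a_d/t + a_d(a_d-1)/(2t^2) + O(t^{-3})$ and $\tilde\gamma_d''(t) = a_d/t^2 - a_d(a_d-1)/t^3 + O(t^{-4})$; substituting into the chain-rule formulae above and regrouping yields the two remaining displays.

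The main obstacle is the rigorous justification of termwise differentiation of the asymptotic expansion. Rather than invoking a general asymptotic theorem, I would re-derive the expansions of $\tilde\gamma_d'$ and $\tilde\gamma_d''$ directly from the identity $G_d'(t) = (t/d)\,G_{d+2}(t)$ in \eqref{eq:Gd1} (and its iterate giving $G_d''$ in terms of $G_{d+2}, G_{d+4}$), so that both $\tilde\gamma_d'$ and $\tilde\gamma_d''$ become ratios of Laplace integrals of the same form as $G_d$. The same substitution $u = 1 - s/t$ then produces expansions that agree, to the required order, with the formal derivatives computed above, closing the argument.
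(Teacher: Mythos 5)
Your proposal is correct and rests on the same analytical core as the paper's proof: a Laplace/Watson-type expansion of $G_d(t)=C_d\int_{-1}^1 e^{tu}(1-u^2)^{a_d-1}\,\d u$ around the endpoint $u=1$. Your substitution $u=1-s/t$ is the paper's substitution $r=1-u$ followed by $y=tr$ in Lemma~\ref{lem:expansions}, and your expansion of $G_d$ together with the cancellation $[a_d+1]_4-[a_d]_2^2=-2[a_d]_2$ reproduces Corollary~\ref{cor:Gd.logGd} exactly, so the first display is handled identically. Where you genuinely differ is in the second and third displays. The paper uses the exact representation $\Yz=U_t\vb+\sqrt{1-U_t^2}\,\bs{S}_{\vb}$, giving $\bmu(\z)=\Ex(U_t)\vb$ and $\bSigma(\z)=\Var(U_t)\vb\vb^\top+\frac{1-\Ex(U_t^2)}{d-1}(\bs{I}_d-\vb\vb^\top)$, and then expands the moments $\Ex[(1-U_t)^\ell]$, $\ell=1,2$, by applying Lemma~\ref{lem:expansions} with the weight $(1-u)^\ell$ inserted into the integrand; this avoids differentiating an asymptotic series altogether. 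You instead use the chain rule on $\gamma(\z)=\tilde\gamma_d(\|\z\|)$ together with the recursion $G_d'=(t/d)G_{d+2}$, which is an equally rigorous and legitimate alternative (and consistent with the paper's formulae, since integration by parts gives $1-\Ex(U_t^2)=\frac{d-1}{t}\Ex(U_t)$, i.e.\ $\tilde\gamma_d'(t)/t=(1-\Ex(U_t^2))/(d-1)$). One practical caveat: in $\tilde\gamma_d''=G_d''/G_d-(G_d'/G_d)^2$ both terms equal $1+O(t^{-1})$, so extracting $\frac{a_d}{t^2}\bigl(1-\frac{a_d-1}{t}+O(t^{-2})\bigr)$ forces you to carry the bracketed expansions of $G_d$, $G_{d+2}$, $G_{d+4}$ to one order beyond what you have written down; the paper's moment route is cheaper here because $\Ex[(1-U_t)^2]$ and $[\Ex(1-U_t)]^2$ are each already of order $t^{-2}$ and only the mild cancellation $[a_d+1]_2-a_d^2=a_d$ occurs. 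That is bookkeeping rather than a gap, but it is the price of your otherwise cleaner route to $\bmu$ and $\bSigma$.
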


The formulae in this lemma, without the terms $O(t^{-2})$, provide excellent approximations as soon as $t = \|\z\|$ is larger than, say, $100 d$.

For the proof of Lemma~\ref{lem:gammaz.muz.Sigmaz}, we use a general result about expansions of moments for distributions on $(-1,1)$.

\begin{Lemma}
\label{lem:expansions}
Let $h : (-1,1) \to \R$ be integrable such that for some $k \in \mathbb{N}_0$ and constants $a > 0$ and $b_0,\ldots,b_k \in \R$,
\[
	h(1 - r) \ = \ \sum_{j=0}^k b_j r^{a+j-1} + O(r^{a+k})
\]
as $r \downarrow 0$. Then, for $\ell \in \mathbb{N}_0$,
\[
	\int_{-1}^1 (1 - u)^\ell e^{tu} h(u) \, \d u \
	= \ e_{}^t t_{}^{-(\ell+a)}
		\Bigl( \sum_{j=0}^k b_j \Gamma(\ell+a+j) t^{-j} + O(t^{-(k+1)}) \Bigr)
\]
as $t \to \infty$.
\end{Lemma}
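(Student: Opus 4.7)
\medskip

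\noindent\textbf{Proof plan.}
The statement is a Watson's-lemma-type expansion, so the plan is to localize the integral near the right endpoint $u=1$, where the factor $e^{tu}$ concentrates all the mass as $t \to \infty$, and then substitute the expansion of $h(1-r)$ term by term.

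First I would substitute $r := 1 - u$, turning the integral into
\[
	\int_{-1}^1 (1-u)^\ell e_{}^{tu} h(u) \, \d u
	\ = \ e_{}^t \int_0^2 r_{}^\ell e_{}^{-tr} h(1-r) \, \d r .
\]
Next, fix some small $\delta > 0$ on which the stated expansion $h(1-r) = \sum_{j=0}^k b_j r^{a+j-1} + R(r)$ with $|R(r)| \le C r^{a+k}$ is valid, and split the integral over $[0,\delta]$ and $[\delta,2]$. On $[\delta,2]$, since $h$ is integrable and $r^\ell$ is bounded there, the factor $e^{-tr}$ forces the contribution to be of order $O(e^{-t\delta})$, which is negligible compared to any negative power of $t$ and may therefore be absorbed into the $O(t^{-(k+1)})$ remainder.

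On $[0,\delta]$, I would handle the main terms and the remainder separately. For each $j \in \{0,\dots,k\}$, the integral $\int_0^\delta r^{\ell+a+j-1} e^{-tr} \d r$ equals $\int_0^\infty r^{\ell+a+j-1} e^{-tr} \d r = \Gamma(\ell+a+j) t^{-(\ell+a+j)}$ up to an exponentially small tail $\int_\delta^\infty r^{\ell+a+j-1} e^{-tr} \d r = O(e^{-t\delta/2})$, again negligible. For the remainder, the bound $|R(r)| \le C r^{a+k}$ yields
\[
	\Bigl| \int_0^\delta r_{}^\ell e_{}^{-tr} R(r) \, \d r \Bigr|
	\ \le \ C \int_0^\infty r_{}^{\ell+a+k} e_{}^{-tr} \, \d r
	\ = \ C \Gamma(\ell+a+k+1) t_{}^{-(\ell+a+k+1)} .
\]

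Collecting the pieces gives
\[
	\int_{-1}^1 (1-u)^\ell e_{}^{tu} h(u) \, \d u
	\ = \ e_{}^t \Bigl( \sum_{j=0}^k b_j \Gamma(\ell+a+j) t_{}^{-(\ell+a+j)}
		+ O(t_{}^{-(\ell+a+k+1)}) \Bigr) ,
\]
and factoring out $t^{-(\ell+a)}$ yields the claimed form. The only mildly delicate point is ensuring the $O(r^{a+k})$ remainder is uniform on a fixed neighborhood of $0$ so that the constant $C$ does not depend on $t$; this follows directly from the hypothesis, since the $O$-statement at $r \downarrow 0$ provides such a neighborhood. No step is genuinely hard: the method is essentially Watson's lemma, and the bookkeeping consists of matching powers of $t$ between the Gamma-integral main terms and the remainder.
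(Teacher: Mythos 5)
Your proposal is correct and follows essentially the same route as the paper's proof: the substitution $r = 1-u$, the split at a small $\delta$, the exponentially small bound on $[\delta,2]$, the uniform remainder bound $|R(r)| \le C r^{a+k}$ on $[0,\delta]$, and the replacement of each truncated Gamma integral by $\Gamma(\ell+a+j)t^{-(\ell+a+j)}$ up to an exponentially small tail (which the paper justifies explicitly via a Markov/Chernoff bound for the Gamma distribution). No gaps.
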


\begin{proof}[\bf Proof of Lemma~\ref{lem:expansions}]
We may write
\[
	\int_{-1}^1 (1 - u)^\ell e^{tu} h(u) \, \d u \
	= \ e^t \int_{-1}^1 (1 - u)^\ell e^{-t(1-u)} h(u) \, \d u \
	= \ e^t \int_0^2 r^\ell e^{-tr} h(1 - r) \, \d r .
\]
For any fixed $\delta \in (0,2)$,
\[
	\int_0^2 r^\ell e^{-tr} h(1 - r) \, \d r \
	= \ \int_0^\delta r^\ell e^{-tr} h(1 - r) \, \d r + R_1(\delta,t)
\]
with
\[
	|R_1(\delta,t)| \
	\le \ 2^\ell \int_{-1}^1 |h(u)| \, \d u \, e^{-t\delta} \
	= \ O(e^{-t\delta}) .
\]
Moreover,
\[
	\int_0^\delta r^\ell e^{-tr} h(1 - r) \, \d r \
	= \ \sum_{j=0}^k b_j \int_0^\delta r^{\ell+a+j-1} e^{-tr} \, \d r
		+ R_2(\delta,t) ,
\]
where
\[
	|R_2(\delta,t)| \
	\le \ D(\delta) \int_0^\infty r^{\ell+a+k} e^{-tr} \, \d r \
	= \ D(\delta) \Gamma(\ell+a+k+1) t^{-(\ell+a+k+1)} ,
\]
and
\[
	D(\delta) \ := \ \sup_{r \in (0,\delta)} r^{-(a+k)} \Bigl| h(1 - r) - \sum_{j=0}^k b_j r^{a+j-1} \Bigr|
\]
is finite for sufficiently small $\delta > 0$.

Finally,
\begin{align*}
	\int_0^\delta r^{\ell+a+j-1} e^{-tr} \, \d r \
	&= \ t^{-(\ell+a+j)} \int_0^{t\delta} y^{\ell+a+j-1} e^{-y} \, \d y \\
	&= \ \Gamma(\ell+a+j) t^{-(\ell+a+j)} + O(t^{-(\ell+a+j)} e^{-\delta t/2}) ,
\end{align*}
because for any $m > 0$ and a random variable $Y$ with distribution $\mathrm{Gamma}(m,1)$, it follows from Markov's inequality that
\[
	\int_{t\delta}^\infty y^{m-1} e^{-y} \, \d y \
	= \ \Gamma(m) \Pr(G_m \ge t\delta) \
	\le \ \Gamma(m) \Ex(e^{G_m/2}) e^{-t\delta/2} \
	= \ \Gamma(m) 2^m e^{-t\delta/2} .
\]\\[-5ex]
\end{proof}

Lemma~\ref{lem:expansions} will be applied to the particular density $h_d$, noting that
\begin{align}
\nonumber
	h_d(1 - r) \
	&= \ C_d(2r - r^2)^{a_d-1} \\
\nonumber
	&= \ 2^{a_d-1} C_d r^{a_d-1} (1 - r/2)^{a_d-1} \\
\label{eq:expansion.hd}
	&= \ \frac{2^{a_d-1} \Gamma(d/2)}{\Gamma(a_d) \Gamma(1/2)} r^{a_d-1}
		\Bigl( 1 - \frac{a_d - 1}{2} r + \frac{[a_d - 1]_2}{8} r^2 + O(r^3) \Bigr)
\end{align}
as $r \downarrow 0$. Here and throughout this section, we use the notation $[s]_j := \prod_{i=0}^{j-1} (s-i)$ for real numbers $s$ and integers $j \ge 1$. Applying Lemma~\ref{lem:expansions} to $h_d$ with the latter expansion leads to expansions for $G_d(t)$ and $\log G_d(t)$.

\begin{Corollary}
\label{cor:Gd.logGd}
As $t \to \infty$,
\[
	G_d(t) \ = \ \frac{2^{a_d-1} \Gamma(d/2)}{\Gamma(1/2)} \, e_{}^t t_{}^{-a_d}
		\Bigl( 1 - \frac{[a_d]_2}{2t} + \frac{[a_d + 1]_4}{8t^2} + O(t^{-3}) \Bigr)		
\]
and
\[
	\log G_d(t) \ = \ \log \Bigl( \frac{2^{a_d-1} \Gamma(d/2)}{\Gamma(1/2)} \Bigr)
		+ t - a_d \log(t) - \frac{[a_d]_2}{2t} \Bigl( 1 + \frac{1}{2t} + O(t^{-2}) \Bigr) .
\]
\end{Corollary}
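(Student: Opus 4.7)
The plan is to invoke Lemma~\ref{lem:expansions} directly with $h = h_d$, $\ell = 0$, and $k = 2$, using the expansion of $h_d(1-r)$ provided in \eqref{eq:expansion.hd}. From \eqref{eq:expansion.hd} I read off $a = a_d$ and
\[
    b_0 \ = \ \frac{2^{a_d-1}\Gamma(d/2)}{\Gamma(a_d)\Gamma(1/2)}, \quad
    b_1 \ = \ -\frac{a_d-1}{2}\,b_0, \quad
    b_2 \ = \ \frac{[a_d-1]_2}{8}\,b_0.
\]
Plugging these into the conclusion of Lemma~\ref{lem:expansions} gives the raw asymptotic $G_d(t) = e^t t^{-a_d}\bigl(b_0\Gamma(a_d) + b_1\Gamma(a_d+1)t^{-1} + b_2\Gamma(a_d+2)t^{-2} + O(t^{-3})\bigr)$. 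The prefactor simplifies cleanly via $b_0\Gamma(a_d) = 2^{a_d-1}\Gamma(d/2)/\Gamma(1/2)$.

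The second step is bookkeeping on the two remaining coefficients. Using $\Gamma(a_d+j)/\Gamma(a_d) = a_d(a_d+1)\cdots(a_d+j-1)$, I would check that $(b_1/b_0)\Gamma(a_d+1)/\Gamma(a_d) = -a_d(a_d-1)/2 = -[a_d]_2/2$ and $(b_2/b_0)\Gamma(a_d+2)/\Gamma(a_d) = (a_d+1)a_d(a_d-1)(a_d-2)/8 = [a_d+1]_4/8$. This yields the first displayed formula.

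For the second formula I would take the logarithm and apply $\log(1+x) = x - x^2/2 + O(x^3)$ to
\[
    x \ := \ -\frac{[a_d]_2}{2t} + \frac{[a_d+1]_4}{8t^2} + O(t^{-3}).
\]
Only the coefficient of $t^{-2}$ requires care: it equals $[a_d+1]_4/8 - [a_d]_2^{2}/8$. The main algebraic simplification (and really the only nontrivial step in the whole proof) is the identity
\[
    [a_d+1]_4 - [a_d]_2^{\,2} \
    = \ a_d(a_d-1)\bigl[(a_d+1)(a_d-2) - a_d(a_d-1)\bigr] \
    = \ -2\,[a_d]_2,
\]
which collapses the $t^{-2}$ term to $-[a_d]_2/(4t^2)$ and lets me factor out $-[a_d]_2/(2t)$ to obtain the stated form $-\frac{[a_d]_2}{2t}\bigl(1 + \frac{1}{2t} + O(t^{-2})\bigr)$.

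The main obstacle is nothing conceptual once Lemma~\ref{lem:expansions} and the expansion \eqref{eq:expansion.hd} are in hand; it is just the clean packaging of the falling-factorial coefficients so that the second-order term in $\log G_d(t)$ acquires the factored form displayed in the corollary, which rests on the cancellation $[a_d+1]_4 - [a_d]_2^{\,2} = -2[a_d]_2$.
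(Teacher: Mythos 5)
Your proposal is correct and follows essentially the same route as the paper: apply Lemma~\ref{lem:expansions} with $h=h_d$, $a=a_d$, $k=2$ and the coefficients from \eqref{eq:expansion.hd}, simplify the falling-factorial ratios $\Gamma(a_d+j)/\Gamma(a_d)$, then take logarithms via $\log(1+x)=x-x^2/2+O(x^3)$ and use the cancellation $[a_d+1]_4-[a_d]_2^2=-2[a_d]_2$. All the computations check out, including the one identity you single out as the nontrivial step.
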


This corollary implies the first expansion in Lemma~\ref{lem:gammaz.muz.Sigmaz}.

\begin{proof}[\bf Proof of Corollary~\ref{cor:Gd.logGd}]
By means of \eqref{eq:expansion.hd} we can apply Lemma~\ref{lem:expansions} with $h = h_d$, $a = a_d$, $k = 2$ and
\begin{align*}
	(b_0,b_1,b_2) \
	&= \ 2^{a_d-1} C_d
		\Bigl( 1, - \frac{a_d - 1}{2}, \frac{[a_d - 1]_2}{8} \Bigr) \\
	&= \ \frac{2^{a_d-1} \Gamma(d/2)}{\Gamma(a_d)\Gamma(1/2)}
				\Bigl( 1, - \frac{a_d - 1}{2}, \frac{[a_d - 1]_2}{8} \Bigr) .
\end{align*}
Then Lemma~\ref{lem:expansions} implies that
\begin{align*}
	G_d(t) \
	&= \ \frac{2^{a_d - 1} \Gamma(d/2)}{\Gamma(a_d)\Gamma(1/2)} \, e_{}^t t_{}^{-a_d}
		\Bigl( 1 - \frac{\Gamma(a_d+1)(a_d - 1)}{2 t}
			+ \frac{\Gamma(a_d + 2) [a_d - 1]_2}{8 t^2} + O(t^{-3}) \Bigr) \\
	&= \ \frac{2^{a_d - 1} \Gamma(d/2)}{\Gamma(1/2)} \, e_{}^t t_{}^{-a_d}
		\Bigl( 1 - \frac{[a_d]_2}{2 t} + \frac{[a_d + 1]_4}{8 t^2} + O(t^{-3}) \Bigr)
\end{align*}
where we used the identities $\Gamma(a_d + 1)/\Gamma(a_d) = a_d$ and $\Gamma(a_d+2)/\Gamma(a_d) = [a_d+1]_2$. The previous and all subsequent expansions are meant as $t \to \infty$. Moreover,
\[
	\log G_d(t) \
	= \ \log \Bigl( \frac{2^{a_d-1} \Gamma(d/2)}{\Gamma(1/2)} \Bigr)
		+ t - a_d \log(t) + \log \Bigl( 1 - \frac{[a_d]_2}{2 t} + \frac{[a_d+1]_4}{8 t^2} + O(t^{-3} \Bigr) ,
\]
and the standard expansion $\log(1 + x) = x - x^2/2 + O(x^3)$ as $x \to 0$ implies that
\begin{align*}
	\log \Bigl( 1 - \frac{[a_d]_2}{2 t} + \frac{[a_d+1]_4}{8 t^2} + O(t^{-3}) \Bigr) \
	&= \ - \frac{[a_d]_2}{2 t} + \frac{[a_d+1]_4}{8 t^2} - \frac{[a_d]_2^2}{8 t^2} + O(t^{-3}) \\
	&= \ - \frac{[a_d]_2}{2 t} - \frac{[a_d]_2}{4 t^2} + O(t^{-3}) \\
	&= \ - \frac{[a_d]_2}{2 t} \Bigl( 1 + \frac{1}{2t} + O(t^{-2}) \Bigr) ,
\end{align*}
because $[a_d+1]_4 - [a_d]_2^2 = - 2 [a_d]_2$.
\end{proof}

Another consequence are expansions for moments of $U_t$. Precisely, we can express moments of $U_t$ in terms of moments of $1 - U_t$, and the latter may be approximated by means of Lemma~\ref{lem:expansions}.

\begin{Corollary}
\label{cor:moments.Ut}
As $t \to \infty$,
\begin{align*}
	\Ex(U_t) \
	&= \ 1 - \frac{a_d}{t} \Bigl( 1 - \frac{a_d-1}{2t} + O(t^{-2}) \Bigr) , \\
	\Var(U_t) \
	&= \ \frac{a_d}{t^2} \Bigl( 1 - \frac{a_d-1}{t} + O(t^{-2}) \Bigr) , \\
	\frac{1 - \Ex(U_t^2)}{d-1} \
	&= \ \frac{1}{t} \Bigl( 1 - \frac{a_d}{t} + O(t^{-2}) \Bigr) .
\end{align*}
\end{Corollary}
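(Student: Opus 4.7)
The plan is to reduce everything to expansions of the raw integrals $\int_{-1}^{1}(1-u)^{\ell} e^{tu} h_d(u)\,\d u$ for $\ell = 1,2$, which we can evaluate using Lemma~\ref{lem:expansions} together with the expansion \eqref{eq:expansion.hd} of $h_d(1-r)$. The idea is that, since the density of $U_t$ is sharply peaked near $u = 1$ for large $t$, it is more natural to work with the centred variable $1 - U_t$ and then translate back. Specifically, I would use the elementary identities
\[
    \Ex(U_t) \;=\; 1 - \Ex(1-U_t),
    \qquad
    1 - \Ex(U_t^2) \;=\; 2\Ex(1-U_t) - \Ex\bigl((1-U_t)^2\bigr),
\]
and
\[
    \Var(U_t) \;=\; \Var(1-U_t) \;=\; \Ex\bigl((1-U_t)^2\bigr) - \Ex(1-U_t)^2,
\]
all of which reduce the task to computing $\Ex(1-U_t)$ and $\Ex\bigl((1-U_t)^2\bigr)$.

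For each of these, I would apply Lemma~\ref{lem:expansions} with $h = h_d$, $a = a_d$, and the coefficients $b_0,b_1,b_2$ that come from \eqref{eq:expansion.hd}, taking $k = 2$ for $\ell = 1$ (to keep the $O(t^{-2})$ term inside the parenthesis) and $k = 1$ for $\ell = 2$ (enough for the stated error). Dividing the resulting asymptotic expansion by $G_d(t)$, whose expansion is furnished by Corollary~\ref{cor:Gd.logGd}, gives an expansion of the form $\Ex\bigl((1-U_t)^\ell\bigr) = t^{-\ell}(c_{\ell,0} + c_{\ell,1} t^{-1} + O(t^{-2}))$. The leading constants simplify nicely using $\Gamma(a_d+j)/\Gamma(a_d) = [a_d+j-1]_j$, so that $\Ex(1-U_t) = (a_d/t)(1 - (a_d-1)/(2t) + O(t^{-2}))$, which immediately yields the first formula.

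For the second formula, the same computation gives $\Ex\bigl((1-U_t)^2\bigr) = (a_d(a_d+1)/t^2)(1 - (a_d-1)/t + O(t^{-2}))$, and squaring the expansion of $\Ex(1-U_t)$ produces $\Ex(1-U_t)^2 = (a_d^2/t^2)(1 - (a_d-1)/t + O(t^{-2}))$. Subtracting, the $a_d(a_d+1)-a_d^2 = a_d$ cancellation delivers $\Var(U_t) = (a_d/t^2)(1 - (a_d-1)/t + O(t^{-2}))$. For the third formula, I combine the two moments via $1 - \Ex(U_t^2) = 2\Ex(1-U_t) - \Ex((1-U_t)^2)$; the $2a_d/t$ leading terms add up and the $t^{-2}$ terms combine into $-a_d[(a_d-1)+(a_d+1)]/t^2 = -2a_d^2/t^2$, giving $(2a_d/t)(1 - a_d/t + O(t^{-2}))$. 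Dividing by $d-1 = 2a_d$ yields the stated expression.

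The only obstacle is careful bookkeeping of the $\Gamma$-ratios and the $1/t$ expansion of $1/G_d(t)$ — each of the three formulas involves a cancellation of order $t^{-2}$ between two terms of order $t^{-1}$, so consistency of the expansions (to second order for $\Ex(U_t)$, to first order beyond the leading term for the others) must be tracked precisely. Since this is the same routine manipulation done in the proof of Corollary~\ref{cor:Gd.logGd}, I would only display the final algebraic simplifications and suppress the purely mechanical steps.
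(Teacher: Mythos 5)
Your proposal is correct and follows essentially the same route as the paper: reduce to the moments $\Ex(1-U_t)$ and $\Ex\bigl((1-U_t)^2\bigr)$, expand the integrals $\int_{-1}^1(1-u)^\ell e^{tu}h_d(u)\,\d u$ via Lemma~\ref{lem:expansions} with the coefficients from \eqref{eq:expansion.hd}, divide by $G_d(t)$, and recombine; the algebraic cancellations you identify ($a_d(a_d+1)-a_d^2=a_d$ and $(a_d-1)+(a_d+1)=2a_d$) are exactly those in the paper's proof. The only cosmetic difference is that the paper takes $k=1$ uniformly (which already suffices for the stated precision) rather than $k=2$ for $\ell=1$.
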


This corollary, applied to \eqref{eq:mu(z)} and \eqref{eq:Sigma(z)}, yields the second and third expansion in Lemma~\ref{lem:gammaz.muz.Sigmaz}.

\begin{proof}[\bf Proof of Corollary~\ref{cor:moments.Ut}]
We apply Lemma~\ref{lem:expansions} with $h = h_d$, $a = a_d$, $k = 1$ and $(b_0,b_1) = C (1, \tilde{b}_1)$ for some constant $C > 0$ and $\tilde{b}_1 = - (a_d-1)/2$. Note that Lemma~\ref{lem:expansions} implies that for any integer $\ell \ge 1$,
\begin{align*}
	\Ex[(1 - U_t)^\ell] \
	&= \ \int_{-1}^1 (1 - u)^\ell e^{tu} h_d(u) \, \d u \Bigr/
		\int_{-1}^1 e^{tu} h_d(u) \, \d u \\
	&= \ \frac{\Gamma(a_d+\ell)}{\Gamma(a_d) \, t^\ell} \,
		\frac{1 + \tilde{b}_1 (\ell+a_d)t^{-1} + O(t^{-2})}
			{1 + \tilde{b}_1 a_d t^{-1} + O(t^{-2})} \\
	&= \ \frac{[a_d + \ell - 1]_\ell}{t^\ell}
		\Bigl( 1 + \frac{\tilde{b}_1 \ell}{t} + O(t^{-2}) \Bigr) \\
	&= \ \frac{[a_d + \ell - 1]_\ell}{t^\ell}
		\Bigl( 1 - \frac{\ell(a_d-1)}{2t} + O(t^{-2}) \Bigr) ,
\end{align*}
Specifically, for $\ell = 1, 2$ we obtain the expansions
\begin{align*}
	\Ex(1 - U_t) \
	&= \ \frac{a_d}{t}
		\Bigl( 1 - \frac{a_d-1}{2t} + O(t^{-2}) \Bigr) , \\
	\Ex[(1 - U_t)^2] \
	&= \ \frac{[a_d + 1]_2}{t^2}
		\Bigl( 1 - \frac{a_d-1}{t} + O(t^{-2}) \Bigr) ,
\end{align*}
and this implies that
\begin{align*}
	\Var(U_t) \
	&= \ \Ex[(1 - U_t)^2] - [\Ex(1 - U_t)]^2 \\
	&= \ \frac{[a_d + 1]_2}{t^2}
		\Bigl( 1 - \frac{a_d-1}{t} + O(t^{-2}) \Bigr)
			- \frac{a_d^2}{t^2} \Bigl( 1 - \frac{a_d - 1}{2t} + O(t^{-2}) \Bigr)^2 \\
	&= \ \frac{[a_d + 1]_2}{t^2}
		\Bigl( 1 - \frac{a_d-1}{t} + O(t^{-2}) \Bigr)
			- \frac{a_d^2}{t^2} \Bigl( 1 - \frac{a_d - 1}{t} + O(t^{-2}) \Bigr) \\
	&= \ \frac{a_d}{t^2}
		\Bigl( 1 - \frac{a_d-1}{t} + O(t^{-2}) \Bigr) , \\
	1 - \Ex(U_t^2) \
	&= \ 2 \Ex(1 - U_t) - \Ex[(1 - U_t)^2] \\
	&= \ \frac{2 a_d}{t} - \frac{[a_d]_2}{t^2} - \frac{[a_d + 1]_2}{t^2} + O(t^{-3}) \\
	&= \ \frac{2 a_d}{t} - \frac{2 a_d^2}{t^2} + O(t^{-3}) \\
	&= \ \frac{d-1}{t} \Bigl( 1 - \frac{a_d}{t} + O(t^{-2}) \Bigr) ,
\end{align*}
because $2a_d = d-1$.
\end{proof}

\section{Derivatives of the negative log-likelihood functions}
\label{app:Derivatives.weighted.NLL}

In the regression settings of Section~\ref{sec:Regression}, suppose that each function $\bsf \in \FF$ is equal to $\bsf = (f_k)_{k=1}^d$ with all components $f_k$ belonging to the same finite-dimensional linear space $\FF^o$ of functions $f^o : \XX \to \R$. If $f^o_1,\ldots,f^o_r$ is a basis of $\FF^o$, then any function $\bsf \in \FF$ can be represented as
\[
	\bsf(\x) \ = \ \bTheta \bF^o(\x)
\]
with a parameter matrix $\bTheta \in \R^{d\times r}$ and $\bF^o(\x) := (f^o_j(\x))_{j=1}^r \in \R^r$. This leads to the negative log-likelihood function $L : \R^{d\times r} \to \R$,
\[
	L(\bTheta) \ := \ \ell(\bTheta \bF^o) \
	= \ \sum_{i=1}^n W_i \, \bigl( \gamma(\bTheta\bF^o(\Xi)) - \Yi^\top \bTheta\bF^o(\Xi) \bigr) .
\]
Here $W_i = 1$ for a parametric GLM, and $W_i = w_{\x_o}(X_i)$ for the local GLMs. Since $\bmu(\z)$ and $\bSigma(\z)$ are the gradient and Hessian matrix of $\gamma$ at $\z$, for $\bTheta, \bDelta \in \R^{d\times r}$,
\begin{align*}
	L(\bTheta + \bDelta) \
	= \ &L(\bTheta) + \sum_{i=1}^n W_i \, (\bmu(\bTheta \bF^o(\Xi)) - \Yi)^\top \bDelta \Xi \\
		&+ \ \frac{1}{2} \sum_{i=1}^n W_i \,
			\bF^o(\Xi)^\top \bDelta^\top \bSigma(\bTheta\Xi) \bDelta \bF^o(\Xi)
		+ O(\|\bDelta\|_F^3)
\end{align*}
as $\bDelta \to 0$. Here $\|\bs{A}\|_F$ is the Frobenius norm of a matrix $\bs{A}$. If we define $\langle \bs{A},\bs{B}\rangle = \trace(\bs{A}^\top \bs{B})$ for matrices $\bs{A}, \bs{B}$ of the same size, then $\|\bs{A}\|_F = \langle \bs{A},\bs{A}\rangle^{1/2}$, and
\[
	\sum_{i=1}^n W_i \, (\bmu(\bTheta \bF^o(\Xi)) - \Yi)^\top \bDelta \bF^o(\Xi) \
	= \ \langle \bDelta, \bs{G}(\bTheta) \rangle
\]
with the gradient matrix
\begin{equation}
\label{eq:Gradient.LTheta}
	\bs{G}(\bTheta) \ := \ \sum_{i=1}^n W_i \,
		(\bmu(\bTheta \bF^o(\Xi)) - \Yi) \Xi^\top \ \in \ \R^{d\times r} .
\end{equation}
Moreover, if $\bDelta = [\bDelta_1,\ldots,\bDelta_r]$ with $\bDelta_j \in \R^d$, then for $\x \in \XX$ and $\bSigma \in \R^{d\times d}_{\rm sym}$,
\[
	\bF^o(\x)^\top \bDelta^\top \bSigma \bDelta \bF^o(\x) \
	= \ \mathrm{vec}(\bDelta)^\top \bigl( \bF^o(\x)\bF^o(\x)^\top \otimes \bSigma) \, \mathrm{vec}(\bDelta) ,
\]
where
\begin{align*}
	\mathrm{vec}(\bDelta) \
	&:= \ [\bDelta_1^\top, \bDelta_2^\top, \ldots, \bDelta_r^\top]^\top \
		\in \ \R^{dr} , \\
	\bF^o(\x)\bF^o(\x)^\top \otimes \Sigma \
	&:= \ \begin{bmatrix}
			f_1^o(\x) f_1^o(\x) \bSigma & f_1^o(\x) f_2^o(\x) \bSigma & \ldots & f_1^o(\x) f_r^o(\x) \bSigma \\
			f_2^o(\x) f_1^o(\x) \bSigma & f_2^o(\x) f_2^o(\x) \bSigma & \ldots & f_2^o(\x) f_r^o(\x) \bSigma \\
			\vdots        & \vdots        & \ddots & \vdots \\
			f_r^o(\x) f_1^o(\x) \bSigma & f_r^o(\x) f_2^o(\x) \bSigma & \ldots & f_r^o(\x) f_r^o(\x) \bSigma
		\end{bmatrix} \
		\in \ \R^{dr \times dr}_{\rm sym}
\end{align*}
Hence, the Hessian matrix for $L(\bTheta)$, viewed as a function of $\mathrm{vec}(\bTheta)$, equals
\begin{equation}
\label{eq:Hessian.LTheta}
	\sum_{i=1}^n W_i \, \bF^o(\Xi) \bF^o(\Xi)^\top \otimes \bSigma(\bTheta \Xi) .
\end{equation}
These formulae are useful to minimize $L(\cdot)$ over $\R^{d\times r}$ via a Newton--Raphson procedure.

\section{Further details about the simulation study}
\label{app:Further.details.numerical.example}

Figure~\ref{fig:GLM_local.weights} illustrates the weights $w_{\x_o}(\Xi)$ for two different choices of $\x_o$ and the weights defined via \eqref{eq:local.weights} with three different choices for $N$. The weights $w_{\x_o}(\X_i)$ are coded on a gray scale with $1$ corresponding to $1$ and almost white corresponding to $0$. The coordinates of $\x_o$ are indicated by auxiliary red lines.

\begin{figure}
\includegraphics[width=0.47\textwidth]{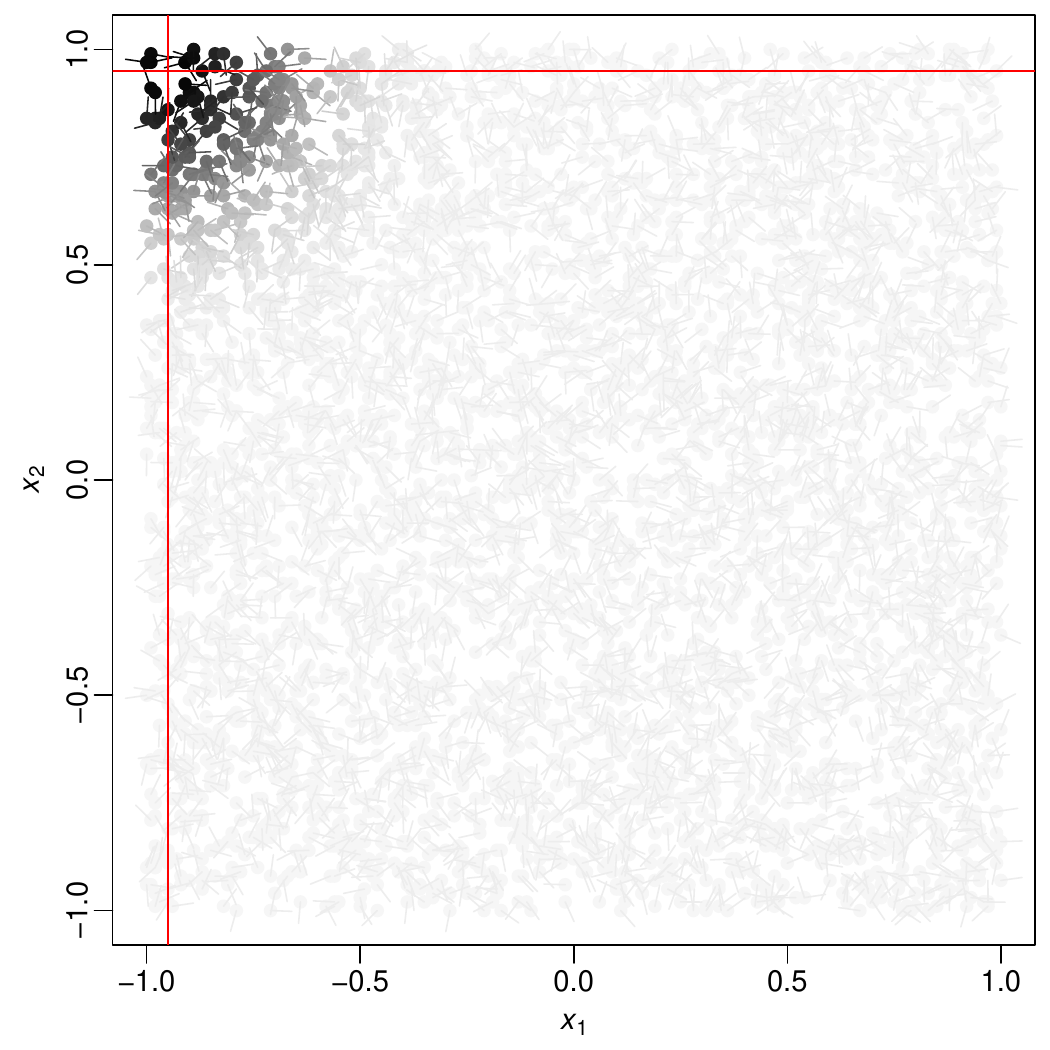}
\hfill
\includegraphics[width=0.47\textwidth]{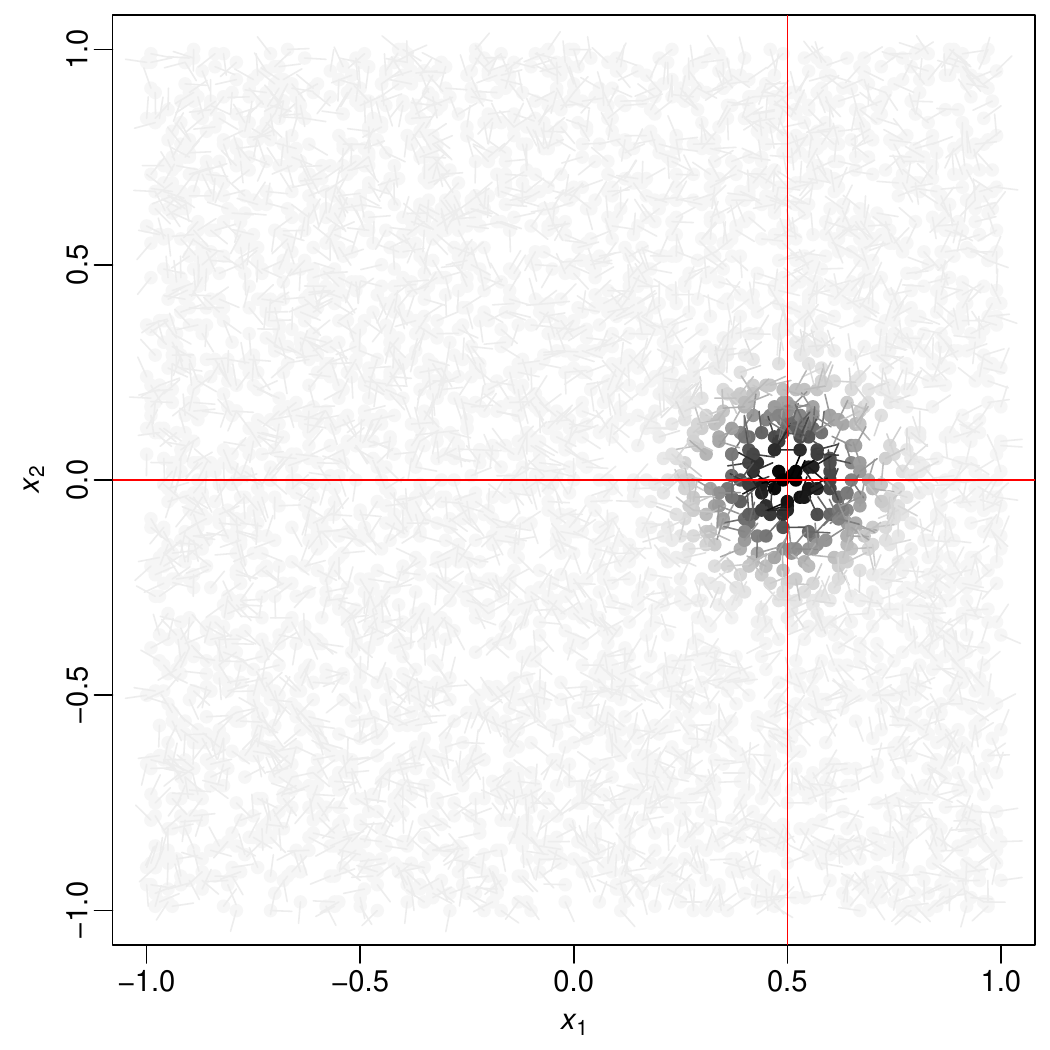}

\includegraphics[width=0.47\textwidth]{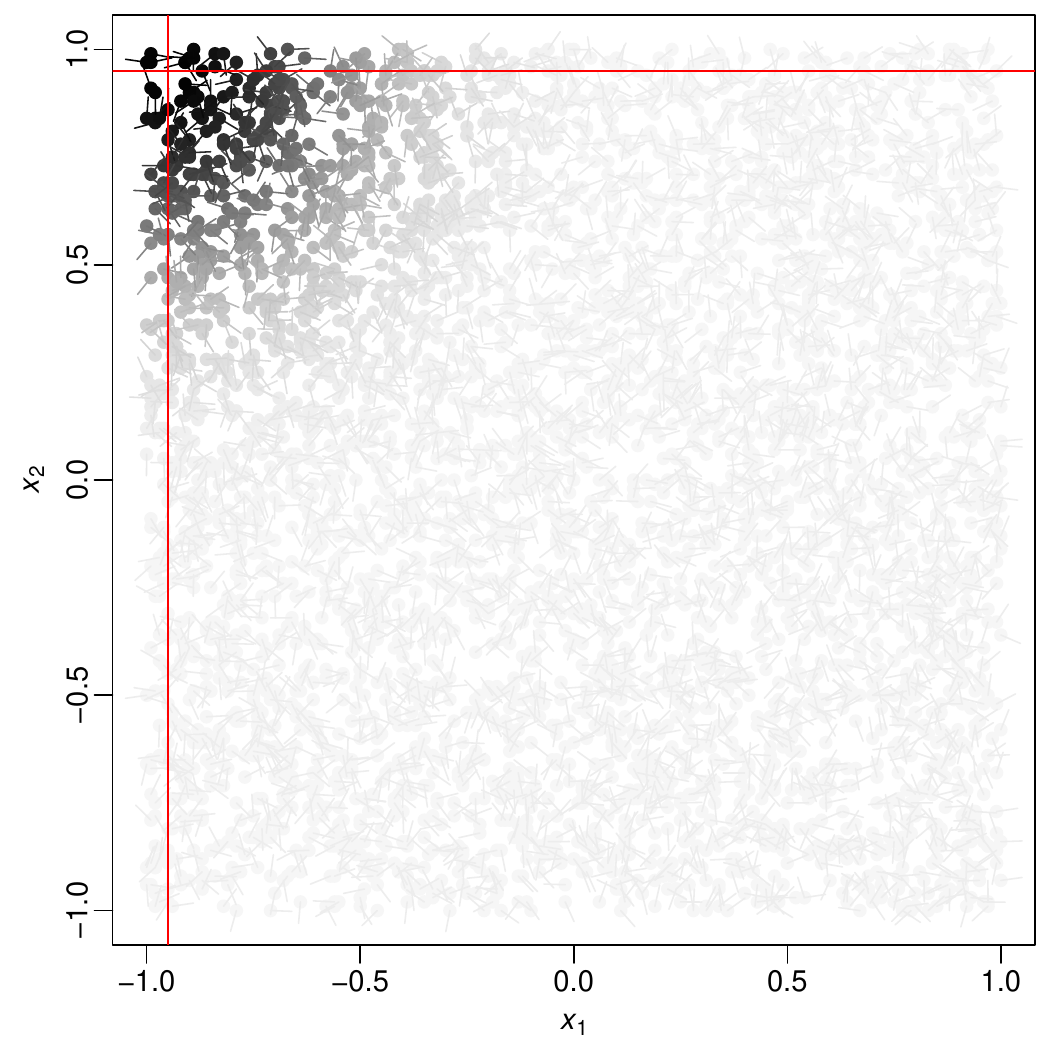}
\hfill
\includegraphics[width=0.47\textwidth]{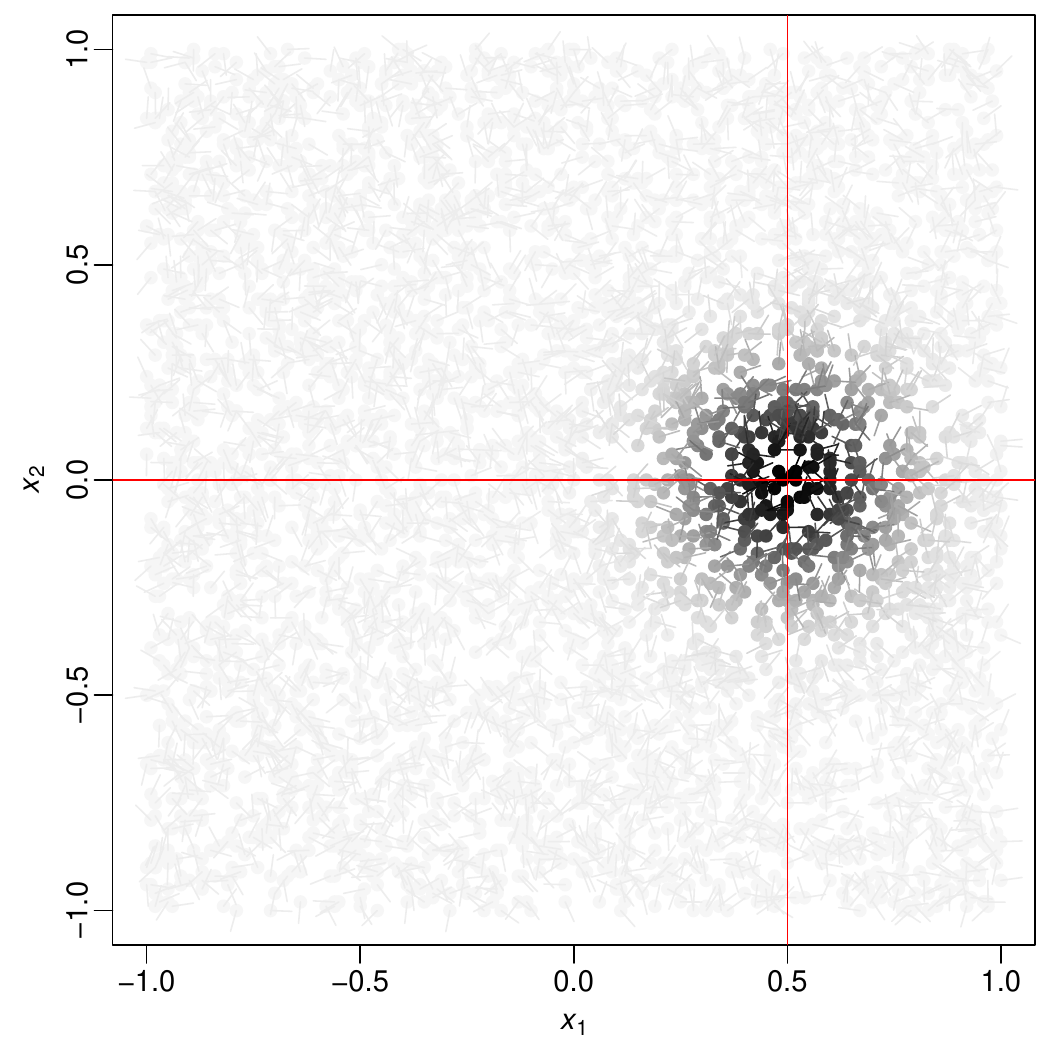}

\includegraphics[width=0.47\textwidth]{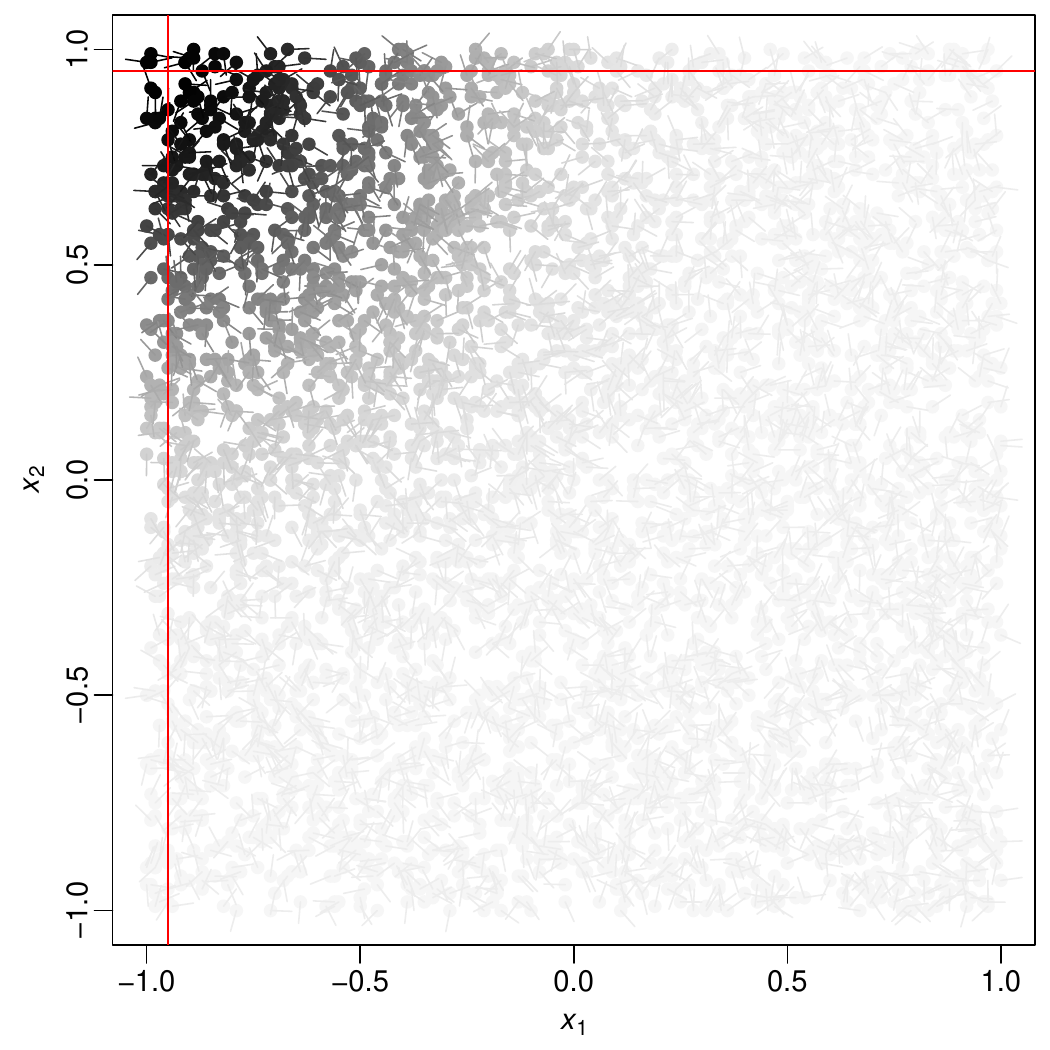}
\hfill
\includegraphics[width=0.47\textwidth]{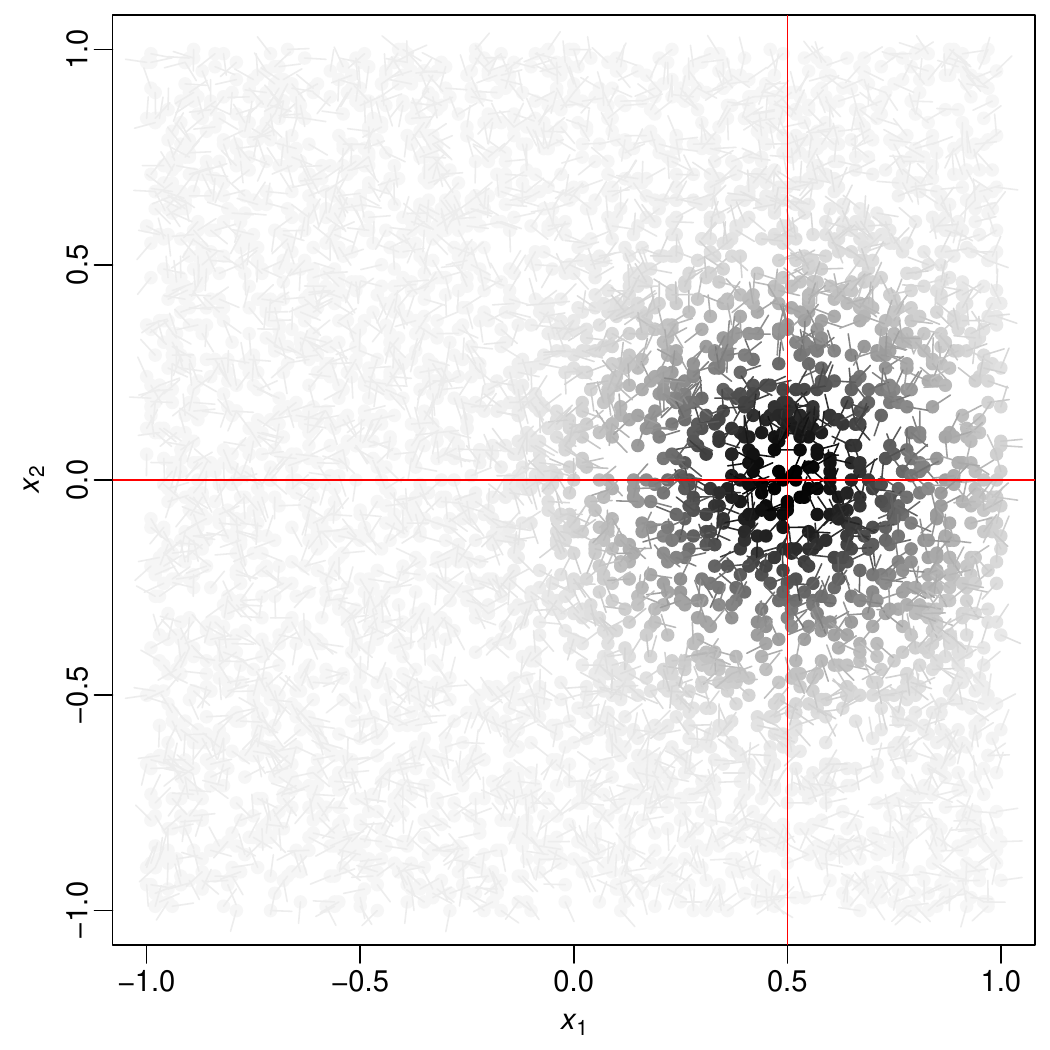}

\caption{Locally weighted raw data via \eqref{eq:local.weights} with $\x_o = [-0.95,0.95]^\top$ (left column) and $\x_o = [0.5,0]^\top$ (right column), where $N = 100$ (1st row), $N = 200$ (2nd row) or $N = 400$ (3rd row).}
\label{fig:GLM_local.weights}
\end{figure}

Figure~\ref{fig:GLM_bias} depicts the bias $\mathrm{Bias}(\x_o) := \Ex \bmu(\hat{\bsf}(\x_o)) - \bmu(\bsf^*(\x_o))$ for six different estimators at each point $\x_o \in \XX_o$ by a line segment connecting $\x_o$ (gray bullet) with $\x_o + 0.7 \cdot \mathrm{Bias}(\x_o)$. One sees clearly that local quadratic modelling invokes a relatively small bias, except in the corners of the domain.

\begin{figure}
\includegraphics[width=0.47\textwidth]{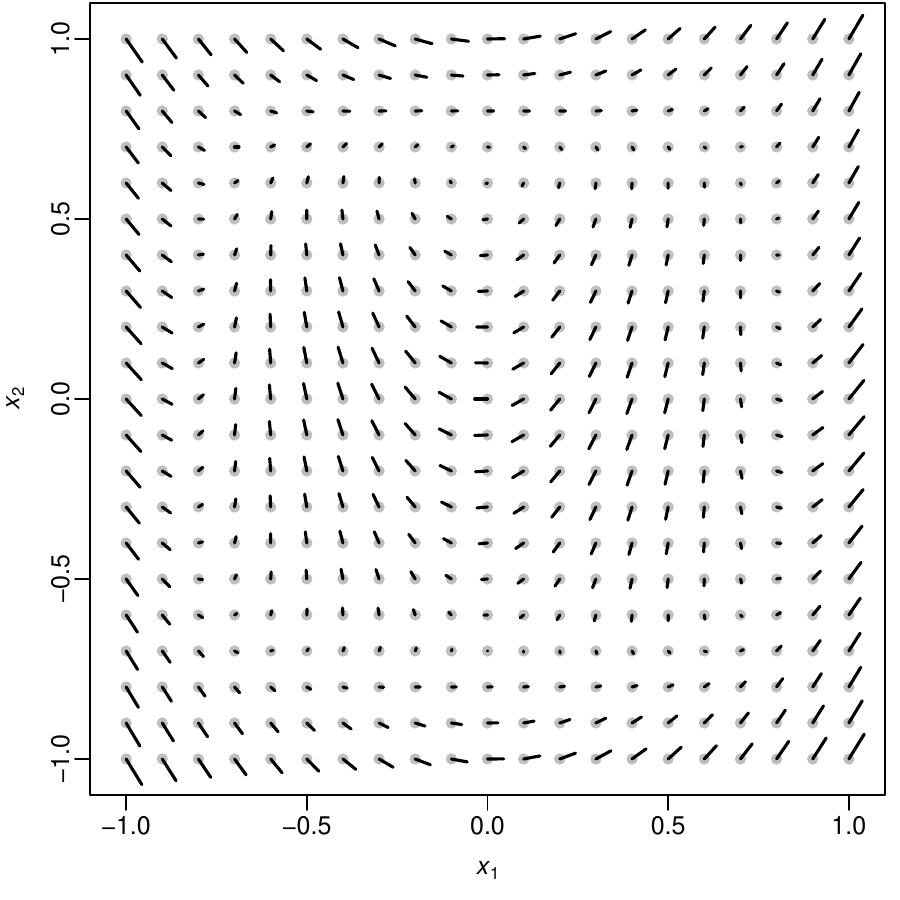}
\hfill
\includegraphics[width=0.47\textwidth]{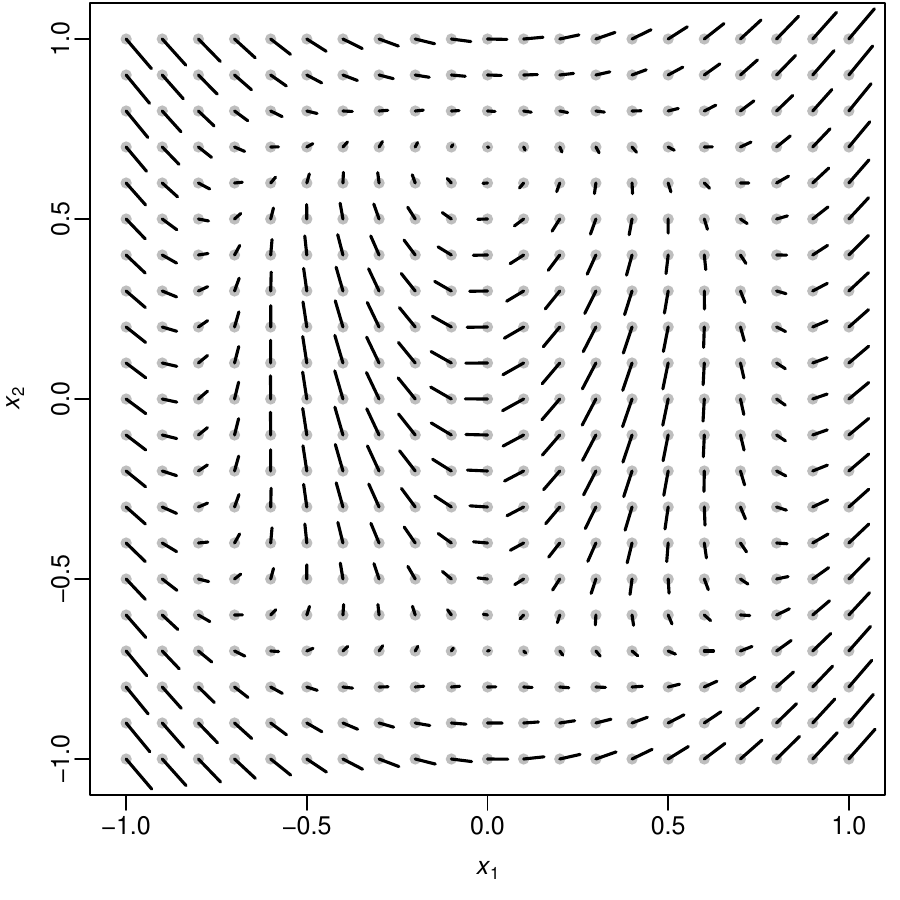}

\includegraphics[width=0.47\textwidth]{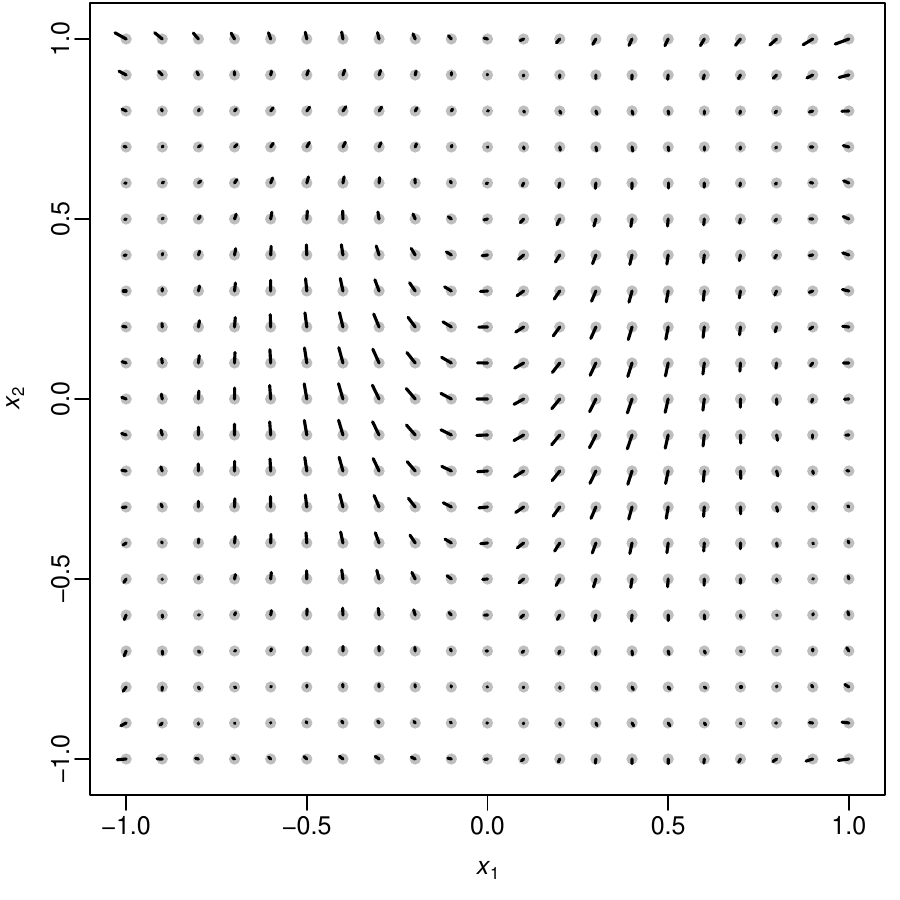}
\hfill
\includegraphics[width=0.47\textwidth]{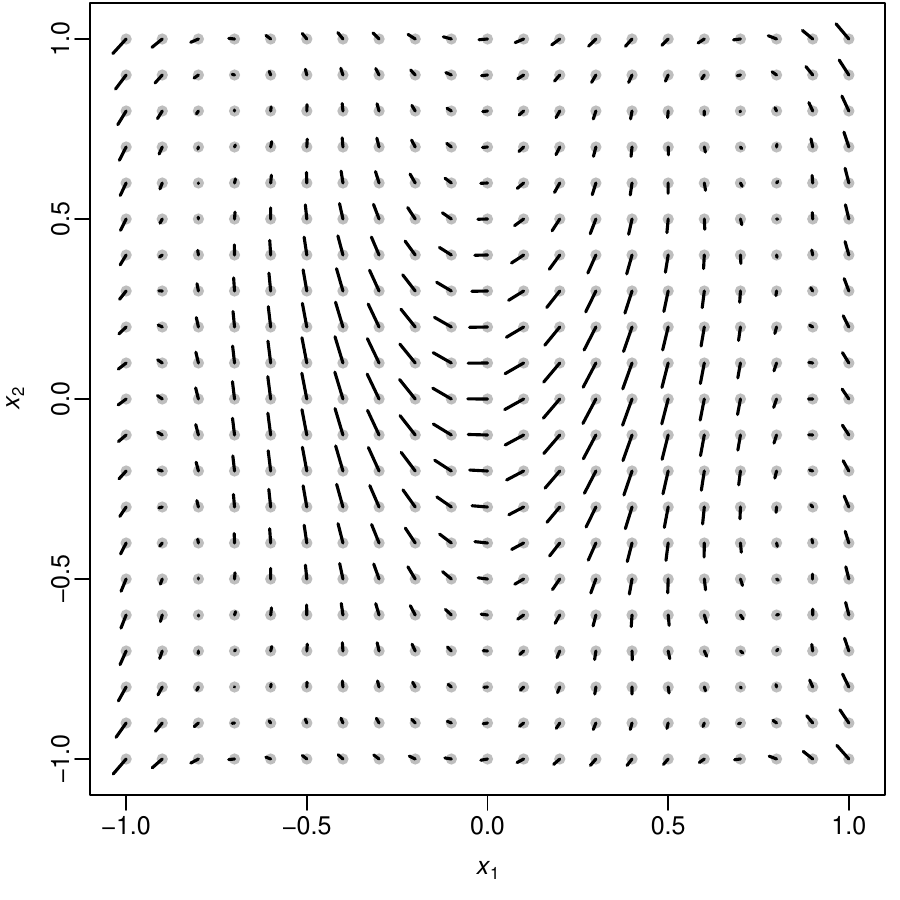}

\includegraphics[width=0.47\textwidth]{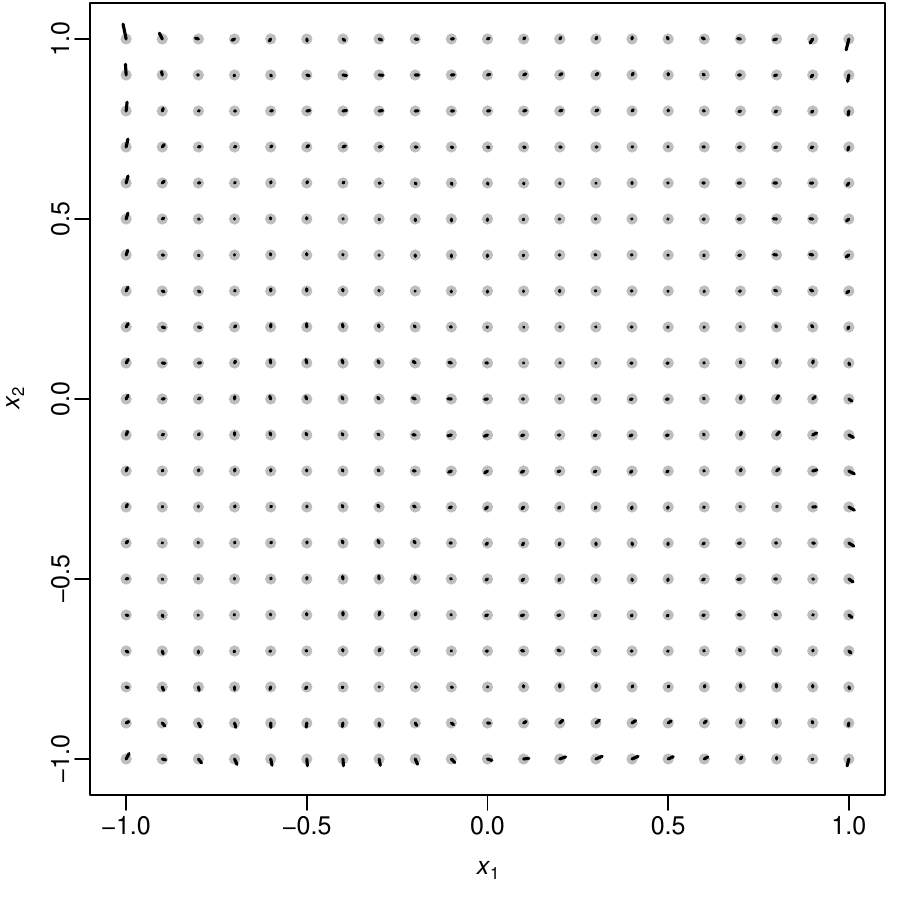}
\hfill
\includegraphics[width=0.47\textwidth]{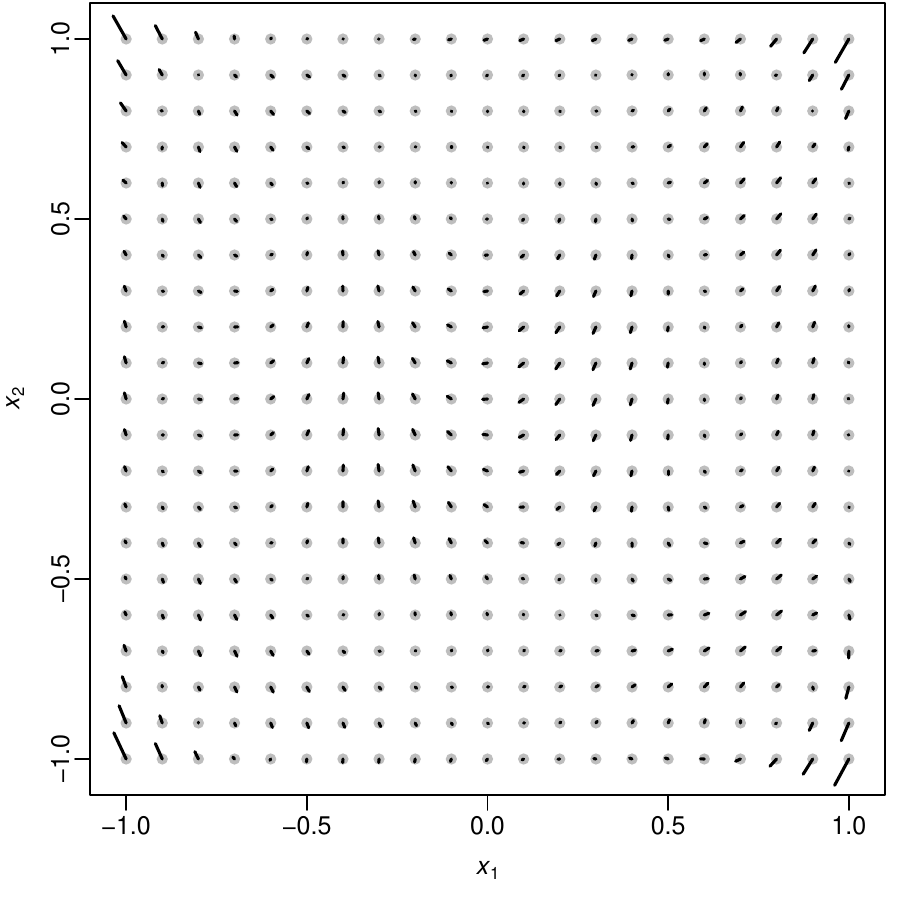}

\caption{Bias function for the local constant (1st row), linear (2nd row) and quadratic (3rd row) estimators, where $N = 200$ (left column) or $N = 400$ (right column).}
\label{fig:GLM_bias}
\end{figure}

\paragraph{Acknowledgements.}
The authors are grateful to Riccardo Gatto and Rudy Beran for stimulating discussions and useful references about directional data. Many thanks also to Alyssa Rhoden and Costanza Rossi for sharing data and insights about Europa and Ganymede, respectively, and their interest in the methods presented here. Constructive comments of an anonymous reviewer led to further improvements.

C.\ Haslebacher acknowledges financial support of the Swiss National Science Foundation (SNSF) through grants P500PT\_225447 and 51NF40\_205606. L.\ D\"umbgen acknowledges the support of SNSF through grant 10001553.


\end{document}